\newcommand\cF{{\cal F}}
\newcommand\cB{{\cal B}}
\newcommand\cP{{\cal P}}
\newcommand\R{{\bf R}}
\newcommand\Q{{\bf Q}}
\newcommand\N{\mathbb{N}}
\newtheorem{theo}{Theorem}[section]
\newtheorem{prop}[theo]{Proposition}
\newtheorem{lemm}[theo]{Lemma}
\newtheorem{defi}[theo]{Definition}
\newtheorem{ex}[theo]{Example}
\newtheorem{coro}[theo]{Corollary}
\newtheorem{rem}[theo]{Remark}
\newcommand\fdem{$\Box$}
\newcommand\beq{\begin{equation}}
\newcommand\eeq{\end{equation}}
\newcommand\beqa{\begin{equation*}}
\newcommand\eeqa{\end{equation*}}
\newcommand\bea{\begin{eqnarray}}
\newcommand\eea{\end{eqnarray}}
\newcommand\bean{\begin{eqnarray*}}
\newcommand\eean{\end{eqnarray*}}
\DeclareMathOperator{\essinf}{ess\;inf}
\DeclareMathOperator{\esssup}{ess\;sup}
\begin{document}

\begin{frontmatter}

\title{Robust  discrete-time super-hedging strategies under AIP condition and  under price uncertainty }

\author[A1]{ Meriam EL MANSOUR, }
\author[A1]{ Emmanuel LEPINETTE}

\address[A1]{ Ceremade, UMR  CNRS 7534,  Paris Dauphine University, PSL National Research, Place du Mar\'echal De Lattre De Tassigny, \\
75775 Paris cedex 16, France and\\
Gosaef, Faculty of Sciences of Tunis, Tunisia.\\
Email: emmanuel.lepinette@ceremade.dauphine.fr
}

\begin{abstract} We solve the  problem of super-hedging European or Asian options  for discrete-time financial market models where executable prices are uncertain. The risky asset prices  are not described by single-valued processes but measurable selections of random sets that allows to consider a large variety of models including bid-ask models with order books, but also models with a delay in the execution of the orders. We provide a numerical procedure to compute the infimum price under a weak no-arbitrage condition, the so-called AIP condition, under which the prices of the non negative European options are non negative. This condition is weaker than the existence of a risk-neutral martingale measure but it is sufficient to numerically solve  the super-hedging problem. We illustrate our method by a numerical example.
 \end{abstract}

\begin{keyword} Super-hedging prices \sep Delayed information \sep Uncertainty  \sep Conditional random sets \sep AIP condition  \smallskip

\noindent {Mathematics Subject Classification (2010): 49J53 \sep 60D05 \sep 91G20\sep 91G80.}

\noindent {JEL Classification: C02 \sep  C61 \sep  G13}
\end{keyword}

\end{frontmatter}

\section{Introduction}
As observed in practice, the executed value of an asset may depend on the order sent by the trader and, also, on the quantities available in the order book. Among  the possible causes of the   well-known slippage  phenomenon,   delays in the execution of the orders,  liquidity disorders,  market impacts, or transaction costs may  influence the executed value. An approach to overcome this difficulty is to assume that we do not know in advance the traded prices. In that case, as proposed in the paper, the order that the trader sends is a mapping that associates to each possible price available in the market a quantity to sell or buy. This is exactly  what we generally observe in practice, in a presence of an order book for example, since there is no single price. 

On the contrary, it is traditional in mathematical finance to suppose that we  first observe  a (new) single  market price and, then, we  choose almost instantaneously  the number of assets to sell or buy in order to revise the portfolio. This means that the last traded price is kept constant long enough in the order book. Moreover,  it coincides with a bid and ask price so that the buy and sell orders are executed at the same value.

 In  real life, there may be  delayed information, see the recent paper \cite{AO} or \cite{OSZ}, \cite{SZ} among others on stochastic control. The delayed information in the problem of pricing is sometimes modeled through incomplete or restricted information as in \cite{IM}, \cite{KSW}, \cite{DKS}, \cite{Da} or using a two filtrations setting as in \cite{CuKT}.

Another type of uncertainty is due to the choice of the model   supposed to approximate the real financial market \cite{BGK}. Model risk may  lead to price misevaluations that are studied in recent papers, in the growing field of robust finance. Since the seminal work of Knight \cite{Kn}, it is now broadly accepted that uncertainty may be described by a parametrized family of models, instead of considering only one model, if there is a lack of information on the parameters, see \cite{PWZ} , \cite{BRS}, \cite{NN}, \cite{BK}, \cite{BK1}, \cite{FNS}, \cite{TTU}. Other models consider that the market is driven by a family of probability measures in such a way that uncertainty stems from the existence of several possible reference probability measures determining which events are negligible, see \cite{Q},  \cite{H}, \cite{CKT}, \cite{BN}, \cite{BBK}, \cite{BFM}, \cite{OW}, \cite{COW}.

In any case, uncertainty is taken into account in the literature by considering either  several probabilistic structures, e.g. a family of reference probability measures and filtrations for the same price process or a family of price process models on the same stochastic basis. In the recent paper \cite{RM}, the choice is made to fix only one filtered probability space on which a collection of stochastic processes describes the possible dynamics of the stock prices.  We follow this alternative approach. Precisely, we consider a unique stochastic basis but we suppose that, in discrete time, the next stock prices at any time are not modeled by a unique vector-valued random variable as it is usual to do. Instead, we assume that the next stock prices belong to a collection of possible processes. The approach we adopt in our paper is slightly different from \cite{RM} in the sense that the collections of possible prices we consider are connected from time to time in such a way that it is possible to represent them through measurable random sets. 

Moreover, a less common type of uncertainty is introduced in this paper. Recall that it is usual in the literature, even in the recent papers on robust finance, to suppose that the transactions are executed at a price which is known in advance. For example, in the Black and Scholes model, the delta-hedging strategy for the European Call option at time $t$ is a function $\Phi(t,S_t)$ of the single price $S_t$ observed at time $t$. In practice, the strategy is discretized at some dates $(t_i)_{i=0,\cdots,n}$ with $n\to +\infty$ so that the number of stocks to trade at time $t_i$ is $\Delta\Phi_{t_i}=\Phi(t_i,S_{t_i})-\Phi(t_{i-1},S_{t_{i-1}})$. In the case where $\Delta\Phi_{t_i}<0$, the executed price at time $t_i$ should be a bid price in the order book and an ask price otherwise, i.e. there should be at least two possible prices. 

We take into account this ambiguity or uncertainty in our paper by assuming that 
there may be several possible executable prices at the next instant. This means in particular that we do not know in advance the price when we send an order to be executed. Precisely, an executed price $S_t$ at time $t$ is only $\cF_{t+1}$-measurable where $\cF_t$ describes the market information available at time $t$.  This is illustrated in our numerical example where  the stock price is modeled by a pair of bid and ask prices.  \smallskip

This article addresses the super-hedging problem of European or Asian options under uncertainty  and may be easily adapted to American options in discrete time.  Here the uncertainty mainly refers to
the uncertainty in executed prices due to the delay, which is modeled by random sets, and
there is  one single physical probability measure.  Moreover, uncertainty may also refers to the presence of an order book so that several  prices may exist and depend on the traded volumes. The advantage of the approach we consider is its flexibility, including a large variety of possible models, e.g. with transaction costs or limit order books. Contrarily to the classical approach, we do not suppose the existence of a risk-neutral probability measure but we work under the AIP condition of \cite{BCL}, \cite{CL}, i.e. we suppose that the super-hedging prices of the non-negative European claims are non-negative, as it is easily observed in the real financial market. We recall that the AIP condition is weaker than the usual NA condition but it is sufficient to deduce numerically tractable pricing estimations, as illustrated in our numerical example. \smallskip

The paper first focuses on the one-period case, see Section \ref{Onestep}, and the multi-period case is automatically obtained by (measurably) paste all periods together. The one-period hedging problem can be described as:

$$V_{t-1}+\theta_{t-1}\Delta S_t\ge g_t(S_0,\cdots,S_t), \,{\rm a.s.}\quad {\rm for\,all\,} S_t\in \Lambda_t((S_u)_{u\le t-1}).$$

Here $S_t$ is a possible executed price which is $\cF_{t+1}$-measurable, $\theta_{t-1}$ is a trading strategy which is made at time $t-1$ and its outcome is revealed at the same  time $t$ as $S_{t-1}$ due to execution delay and, thus, $V_{t-1}$, which models the portfolio value at time $t-1$, is also $\cF_t$-measurable; $g$ is an Asian option to be hedged while $\Lambda_t((S_u)_{u\le t-1})$ represents the set of all possible prices $S_t$ that can be traded strictly after time $t$. The problem is essentially converted to the one without delay by taking supremum conditioned on $\cF_t$ in the above equation, and the (minimal) super-hedging price is provided in Theorem \ref{theo-OS}  in terms of the concave envelope of some related function restricted on the
conditional closure of $\Lambda_t((S_u)_{u\le t-1})$, see \cite{EL}. Properties of the hedging price, including continuity, convexity, and measurability are analyzed in Section \ref{MR}. These properties are important to deduce backwardly the multi-period case which involves a measurable pasting. 

The benefit of our approach is its easy implementation as illustrated in Section \ref{NI}. Indeed, roughly speaking, our main results  state that we only need to know the range of the future price values in terms of the observed  prices to deduce the strategy $\theta_t$ to be followed. This can be achieved  from a historical data.  The strategy  depends at time $t$ on the price $S_t$, i.e. $\theta_t=\theta_t(S_t)$ where $S_t$ is only revealed at time $t+1$ so that the order a time $t$ is the  $\cF_t$-measurable mapping $z\mapsto \theta_t(z)$ and not $\theta_t(S_t)$. Note that the executed price $S_t$ will depend on the model, e.g. $S_t$ may be one of the several bid and ask prices, and the delayed observation of $S_t$  at time $t+1$ allows to deduce the quantity $\theta_t(S_t)$ to hold in the portfolio.

\section{Formulation of the problem}

Let $(\Omega,(\cF_t)_{t \in \{0,\ldots,T+1\}},\cF_T,P)$ be a filtered complete probability space  where $T$ is the time horizon. We suppose that $\cF_0$ is the trivial $\sigma$-algebra and the $\sigma$-algebra $\cF_t$ represents the information available on the market at time $t$.   The financial market we consider is composed of $d$ risky  assets  and a bond $S^0$. We assume without loss of generality that $S^0=1$. 

In the following, we shall consider random subsets $A$ of $\R^d$, i.e. $A=A(\omega)$ may depend on $\omega\in \Omega$. We then denote by $L^0(A,\cF_t)$ the set of all random variables $X_t$ which are $\cF_t$-measurable and satisfies $X_t(\omega)\in A(\omega)$ a.s.. At last, $\R^d_+$ is the set of all $x=(x_i)_{i=1}^d\in \R^d$ such that $x_i\ge 0$ for all $i=1,\cdots,d$.

  \smallskip

Let us consider, for each $t\le T+1$, $\Lambda_t\subseteq L^0(\R^d_+,\cF_{t+1})$  a  collection of $\cF_{t+1}$-measurable  random variables representing the possible executable prices for the risky assets between time  $t$ and time $t+1$. We suppose that,  at time $t$, the set $\Lambda_t$ may depend on the observed traded prices before time $t$, i.e. to each vector of  prices $(S_u)_{u\le t-1}$, we associate a set  $\Lambda_t=\Lambda_t((S_u)_{u\le t-1})$ representing the possible next prices $S_t$ after time $t$ given that we have observed the executed prices  $(S_u)_{u\le t-1}$ at time $t$. We adopt the financial principle that the executed price $S_t$ is only known  strictly after the order is sent at time $t$ but before time $t+1$. 

\begin{defi} \label{PP} A  price process is an $(\cF_{t+1})_{t=-1,\cdots,T}$-measurable non-negative process $(S_{t})_{t =-1,\cdots,T}$ such that  $S_t\in \Lambda_t((S_u)_{u\le t-1})$ is $\cF_{t+1}$ measurable for all $t=0,\cdots,T$ and $S_{-1}\in \R$ is given.
\end{defi}

\begin{ex}
Recall that   $S_{t}$ represents the  prices $(S_t^1,\cdots,S_t^d)$ of $d\ge 1$ risky assets proposed by the market to the portfolio manager when selling or buying. A typical case could be $\Lambda_t=L^0(I_t,\cF_{t+1})$ with 
$$I_t=\Pi_{j=1}^d [S^{bj}_t,S^{aj}_t],$$ where 
$(S^{bj})_{j=1,\cdots,d}$ and $(S^{aj})_{j=1,\cdots,d}$ are respectively the bid and the ask price processes observed in the market between time $t$ and $t+1$ that may depend on $(S_u)_{u\le t-1}$. They are not necessary the best bid/ask prices as, in practice, the real transaction price may be a convex combination of bid and ask prices. Indeed, a transaction is generally the result of an agreement between sellers and buyers but it also depends on the traded volume. Clearly, the portfolio manager does not benefit in general from the last  traded price observed in the market when  sending an order. On the contrary, he should face an uncertain price $S_t$ that depends on the type of order (and may be not executed) but it also depends on some random events he does not control, e.g. slippage. A simple way to model this phenomenon is to suppose that the executable prices obtained by the manager belong to random intervals.\smallskip
\end{ex}

\begin{ex} Another interesting case is when $\Lambda_t=\{S_t^{\theta}:~\theta\in \Theta\}$ is a parametrized family  of random variables. For instance, consider  fixed processes $(\xi_u)_{u\le T}$ and $(m_u)_{u\le T}$  adapted to $(\cF_{t+1})_{t=0,\cdots,T}$ and independent of $\cF_{t}$. Let $C$ be a compact subset of $\R$ and  suppose that $S_{-1}$ is given.  We define recursively 
$$\Lambda_t((S_u)_{u\le t-1})=\left\{S_{t-1}\exp(\sigma \xi_t +m_t):~\,\sigma \in C   \right\},\,S_{t-1}\in \Lambda_{t-1},\,t\le T.$$ In this model, there is an uncertainty on prices because of the unknown parameter (volatility) $\sigma$. This is a classical problem in robust finance, see for example \cite{NN}.\smallskip
 \end{ex}
 
 A  portfolio strategy is an $(\cF_{t+1})_{t =-1,\cdots,T}$-adapted process $\hat\theta=(\theta^0,\theta)$ where, for all $t =0,\cdots,T$, $\theta_{t}\in \R^d$ (resp. $\theta^0_{t}\in \R$)  describes the quantities of  risky assets (resp. the bond) held in the portfolio between time $t$ and time $t+1$. Since the strategies are not supposed to be adapted to $(\cF_t)_{t =0,\cdots,T}$ but only adapted to $(\cF_{t+1})_{t =0,\cdots,T}$, the manager is not supposed  to control the quantity of assets he wants to sell or buy. This is what happens in practice because the orders are not necessarily executed, for instance in the case of limit stock market orders. Precisely, the portfolio manager may send an $\cF_t$-measurable order at time $t$ that depends on the uncertain price $S_{t}$ which is only $\cF_{t+1}$ measurable. For instance, such an order could be {\sl Buy at most $1000$ units at a price less than or equal to $145$ euros} so that the strategies and the executed prices are linked. In the example, the executed quantity  should be deduced from an order book as the minimum between $1000$ and the number of assets we may obtain for a price less that $145$. Then, the executed price is a weighted average of all prices available for less than $145$ in the order book.\smallskip

   For such a strategy $\hat\theta=(\theta^0,\theta),$ we define the portfolio process with initial endowment $V_0\in L^0(\R,\cF_1)$, as the liquidation value
 $$V^{\hat \theta}=\theta^0+\theta S=\theta^0+\sum_{i=1}^d\theta^i S^i.$$
 Recall that  $S_t$ is observed strictly after the portfolio manager  sends an order for  $\theta_t$ at time $t$. In the super-hedging problem we solve, we  expect  orders which are mapping $x\mapsto \Delta \theta_t(x)=\theta_t(x)-\theta_{t-1}((S_u)_{u\le t-1})$ where  $\Delta \theta_t(x)$ is $\cF_t$-measurable and the executed quantity  $\Delta  \theta_t(S_t)$ is only $\cF_{t+1}$-measurable since $S_t$ is $\cF_{t+1}$-measurable. Here the notation $xy$ is used to designate the Euler scalar product between two vectors $x,y$ of $\R^d$.  \smallskip
 
In the following, we only consider self-financing portfolio processes $V^{\hat \theta}$, i.e. they satisfy by definition: $$\Delta V^{\hat \theta}_t:= V^{\hat \theta}_t-V^{\hat \theta}_{t-1}=\theta_{t-1}\Delta S_t,$$
where $\Delta S_t:=S_t-S_{t-1}$. Indeed, this dynamics holds if and only if we have $-(\theta_{t}^0-\theta_{t-1}^0)S^0_t=(\theta_{t}-\theta_{t-1})S_t$. This means that the cost of the new portfolio allocation $(\theta^0_t,\theta_t)$, i.e. buying or selling the quantities $(|\theta_{t}^i-\theta_{t-1}^i|)_{i=0}^d$,  at the executed price $S_t$ is charged to the  cash account.  Therefore, 
\bea \label{self-finan} V^{\hat \theta}_t=V_0+\sum_{u=1}^t  \theta_{u-1} \Delta S_u.\eea
\noindent It is then natural by (\ref{self-finan}) to   write $V^{\theta}=V^{\hat \theta}$. \smallskip

 The aim of the paper  is to solve the following problem:
Construct the minimal super-hedging strategy of an Asian option whose payoff is $g(S_0,\cdots,S_T)$ for some convex deterministic function $g$ on $(\R^d)^{T+1}$. Because of price uncertainty, this means that we shall construct a self-financing strategy $\theta$ and we shall determine the minimal initial endowment $V_0=V^{\theta}_0$  such that we have $V^{\theta}_T\ge g(S_0,S_1,\cdots,S_T)$  independently of the value of the executable prices $S_t\in \Lambda_t((S_u)_{u\le t-1})$ are for $t\le T$. Note that $V_t$ is $\cF_{t+1}$-measurable hence one more step is necessary to deduce the initial endowment $P_0$ at time $t=0$ we need for initiating a super-hedging portfolio process $V$, i.e.  $P_0\ge V_0$. Indeed, $P_0$ should be $\cF_0$-measurable, i.e.  a constant, or equivalently $P_0\ge \esssup_{\cF_0}(V_0)$. We refer to \cite{CL} for the definitions of conditional essential supremum and infimum. 


\section{The  super-hedging problem}\label{SHP}
\subsection{The  one time step resolution }\label{Onestep}

We first introduce the basic tools and theoretical results we need in this section. A set $\Lambda$ of measurable random variables is said $\cF$-decomposable if for any finite partition $(F_i)_{i=1,\cdots,n}\subseteq \cF$ of $\Omega$, and for every family $(\gamma_i)_{i=1,\cdots,n}$ of $\Lambda$, we have $\sum_{i=1}^n\gamma_i1_{F_i}\in \Lambda$. In the following, we denote by $\Sigma(\Lambda)$ the $\cF$-decomposable envelope of $\Lambda$, i.e. the smallest $\cF$-decomposable family containing $\Lambda$. Notice that  
\bean \label{envelopDecomp} \Sigma(\Lambda)=\left\{  \sum_{i=1}^n\gamma_i1_{F_i}:~n\ge 1,\,(\gamma_i)_{i=1,\cdots,n}\subseteq \Lambda,\, (F_i)_{i=1,\cdots,n}\subseteq \cF\, {\rm s.t. } \sum_{i=1}^nF_i=\Omega\right\}.\quad\eean
The closure $\overline{\Sigma}(\Lambda)$ in probability of $\Sigma(\Lambda)$ is decomposable even if $\Lambda$ is not decomposable. By \cite[Theorem 2.4]{LM}, there exists a $\cF$-measurable closed random set $\sigma(\Lambda)$ such that $\overline{\Sigma}(\Lambda)=L^0(\sigma(\Lambda),\cF)$ is the set of all $\cF$-measurable selectors of $\sigma(\Lambda)$. \smallskip

We now introduce the general one step problem between the dates $t-1$ and $t$ for $t\ge 1$. To do so, we suppose that after  time $t-1$ but strictly before time $t$ the portfolio manager observes the price  $S_{t-1}$, as a consequence of her/his order, see Definition \ref{PP}. More precisely,  the portfolio manager   knows $(S_u)_{u\le t-2}$ at time $t-1$ and  sends an order at time $t-1$ which is executed with a delay so that the executed price $S_{t-1}\in \Lambda_{t-1}((S_u)_{u\le t-2})$ is only observed strictly after $t-1$, i.e. $S_{t-1}$ is $\cF_t$-measurable. \smallskip

In the following, we consider the $\sigma$-algebra $\cF_{t}=\sigma(S_u:~u\le t-1)$ for all $t\ge 1$. Let us consider a  random function $g_t$ defined on $(\R^d)^{t+1}$, $t\ge 1$. We assume that the mapping $(\omega,z)\mapsto g_t(S_0(\omega),\cdots,S_{t-1}(\omega),z)$ is $\cF_{t}\times \cB(\R^d)$-measurable and $z\mapsto g_t(S_0,S_1,\cdots,S_{t-1},z)$ is lower-semicontinuous (l.s.c.) almost surely independently  the price process $(S_u)_{u\le t-1}$ is. The first goal is to characterise the set $\cP_{t-1}$ of all $V_{t-1}\in L^0(\R,\cF_{t})$ that depend on $(S_u)_{u\le t-1}$ such that:
\bea \label{IneqSH} V_{t-1}+\theta_{t-1}\Delta S_t\ge g_t(S_0,\cdots,S_t), \,{\rm a.s.}\quad {\rm for\,all\,} S_t\in \Lambda_t((S_u)_{u\le t-1}),\eea
for some $\theta_{t-1}\in L^0(\R^d,\cF_{t})$ \footnote{Note that the condition $V_{t-1}\in L^0(\R,\cF_{t})$ is not sufficient for the portfolio manager to observe it when $t=1$ as $V_{0}$ is not $\cF_{0}$-measurable.}. As $\theta_{t-1}$ is only $\cF_t$-measurable, we also expect a dependence between  $\theta_{t-1}$ and $(S_u)_{u\le t-1}$ as we shall see later.  Nevertheless, we do not suppose an explicit dependence of $\Lambda_t((S_u)_{u\le t-1})$ with respect to $\theta_{t-1}$, which is an open problem.    We observe  by lower-semicontinuity that (\ref{IneqSH}) holds if and only if 
\bea \label{IneqSH+} V_{t-1}+\theta_{t-1}\Delta S_t\ge g_t(S_0,\cdots,S_t),\,{\rm a.s.}\quad {\rm for\,all\,} S_t\in \overline{\Sigma}(\Lambda_t((S_u)_{u\le t-1})).\eea
\noindent This means that we may suppose w.l.o.g. that $\overline{\Sigma}(\Lambda_t((S_u)_{u\le t-1}))=\Lambda_t((S_u)_{u\le t-1})$. In the following, we denote by 
$I_t((S_u)_{u\le t-1})$ the $\cF_{t+1}$-measurable closed random set such that $\overline{\Sigma}(\Lambda_t((S_u)_{u\le t-1}))=L^0(I_t((S_u)_{u\le t-1}),\cF_{t+1})$, see \cite[Theorem 2.4]{LM}. \smallskip

By \cite[Theorem 3.4]{EL}, we deduce that (\ref{IneqSH}) is equivalent to $V_{t-1}\ge p_{t-1}$ where $p_{t-1}= p_{t-1}((S_u)_{u\le t-1},\theta_{t-1})$ is given by
\bean 
p_{t-1} &=&\theta_{t-1}S_{t-1}+\sup_{z\in {\rm cl\,}(I_t((S_u)_{u\le t-1})|\cF_{t})}\left(g_t(S_1,\cdots,S_{t-1},z)-\theta_{t-1}z \right),\\
\quad &=&\theta_{t-1}S_{t-1}+f_{t-1}^*(-\theta_{t-1}).\eean

In the formula above, ${\rm cl\,}(I_t((S_u)_{u\le t-1})|\cF_{t})$ is the conditional closure of $I_t((S_u)_{u\le t-1})$, i.e. the smallest $\cF_{t}$-measurable closed random set which contains $I_t((S_u)_{u\le t-1})$ almost surely. We refer the readers to \cite[Theorem 3.4]{EL} for the existence and uniqueness of such conditional random set. Moreover,  $f_{t-1}^*(y)=\sup_{z\in \R^d}(yz-f_{t-1}(z))$ is the Fenchel-Legendre conjugate function of $f_{t-1}$ defined as
\bea \label{f} f_{t-1}(z)&:=&-g_t(S_0,\cdots,S_{t-1},z)+\delta_{{\rm cl\,}(I_t((S_u)_{u\le t-1})|\cF_{t})}(z),\eea
where $\delta_{{\rm cl\,}(I_t((S_u)_{u\le t-1})|\cF_{t})}\in \{0,\infty\}$ is infinite on the complimentary of \\ ${\rm cl\,}(I_t((S_u)_{u\le t-1})|\cF_{t})$ and $0$ otherwise. Notice that $f_{t-1}^*$ is convex and l.s.c. as a supremum (on ${\rm cl\,}(I_t((S_u)_{u\le t-1})|\cF_{t})$) of convex and l.s.c. functions. Moreover, by \cite[Theorem 3.4]{EL}, $(\omega,y)\mapsto f_{t-1}^*(\omega,y)$ is $\cF_{t}\times \cB(\R^d)$-measurable.  Therefore, ${\rm Dom\,}f_{t-1}^*:=\{y:~f_{t-1}^*(\omega,y)<\infty\}$ is an $\cF_{t}$-measurable random set. We deduce that the $\cF_{t}$-measurable prices at time $t-1$ are given by the Minkowski sum
\bea \nonumber  \cP_{t-1}((S_u)_{u\le t-1})&=&\left\{\theta_{t-1}S_{t-1}+f_{t-1}^*(-\theta_{t-1}):~ \theta_{t-1}\in L^0(\R^d,\cF_{t})   \right\}\\
\label{prices} &&+L^0(\R_+,\cF_{t}).\eea
The second step is to determine the infimum super-hedging price as 
\bea \label{infPrice} p_{t-1}((S_u)_{u\le t-1})=\essinf_{\cF_{t}}\cP_{t-1}((S_u)_{u\le t-1}).\eea To do so, we use  the arguments of \cite[Theorem 2.8]{CL} and we obtain our first main result:

\begin{theo} \label{theo-OS} Suppose that the mapping $(\omega,z)\mapsto g_t(S_0(\omega),\cdots,S_{t-1}(\omega),z)$ is $\cF_{t}\times \cB(\R^d)$-measurable and $z\mapsto g_t(S_0,S_1,\cdots,S_{t-1},z)$ is lower-semicontinuous (l.s.c.) almost surely whatever the price process $(S_u)_{u\le t-1}$ is. Let us consider the function $f_t$ defined by (\ref{f}) and the set of all prices given by (\ref{prices}). Then, the infimum price given by (\ref{infPrice}), satisfies  $ p_{t-1}((S_u)_{u\le t-1})=-f_{t-1}^{**}(S_{t-1})$.\end{theo}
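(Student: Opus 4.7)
The plan is to reduce the assertion to the elementary biconjugate identity for $-f_{t-1}^{**}(S_{t-1})$ and then upgrade the pointwise infimum formula to one involving only $\cF_t$-measurable selections, using the measurability of $(\omega,y)\mapsto f_{t-1}^*(\omega,y)$ that was noted just before the statement. The $L^0(\R_+,\cF_t)$ summand in (\ref{prices}) is harmless for the essential infimum because adding a nonnegative $\cF_t$-measurable variable can only make the price larger.

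First, I would record the purely convex-analytic identity. By definition of the double Fenchel--Legendre transform (performed $\omega$ by $\omega$),
\[
-f_{t-1}^{**}(\omega,S_{t-1}(\omega)) \;=\; \inf_{\theta\in\R^d}\bigl\{\theta\, S_{t-1}(\omega)+f_{t-1}^*(\omega,-\theta)\bigr\},
\]
after the change of variable $y=-\theta$ inside $\sup_y(y\cdot S_{t-1}-f_{t-1}^*(y))$. From this it is immediate that for every $\theta_{t-1}\in L^0(\R^d,\cF_t)$ and every $\eta\in L^0(\R_+,\cF_t)$,
\[
\theta_{t-1}S_{t-1}+f_{t-1}^*(-\theta_{t-1})+\eta \;\ge\; -f_{t-1}^{**}(S_{t-1})\quad\text{a.s.},
\]
which gives the easy inequality $p_{t-1}((S_u)_{u\le t-1})\ge -f_{t-1}^{**}(S_{t-1})$ by taking $\essinf_{\cF_t}$.

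The work is in the converse. Here I would argue as in \cite[Theorem 2.8]{CL}: for each $n\ge 1$, introduce the $\cF_t$-measurable set-valued map
\[
\Gamma_n(\omega) \;=\; \Bigl\{\theta\in\R^d:~\theta\, S_{t-1}(\omega)+f_{t-1}^*(\omega,-\theta)\le -f_{t-1}^{**}(\omega,S_{t-1}(\omega))+\tfrac{1}{n}\Bigr\}
\]
on the set $A=\{-f_{t-1}^{**}(S_{t-1})>-\infty\}$, and, on the complementary set $A^c$, the set $\Gamma_n(\omega)=\{\theta:\theta S_{t-1}+f_{t-1}^*(-\theta)\le -n\}$. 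The pointwise identity above ensures that $\Gamma_n$ has nonempty values a.s., and the joint $\cF_t\otimes\cB(\R^d)$-measurability of $f_{t-1}^*$ together with the $\cF_t$-measurability of $S_{t-1}$ and of $f_{t-1}^{**}(S_{t-1})$ (which, being the $\cF_t$-measurable lower envelope in the $y$ variable of an $\cF_t\otimes\cB(\R^d)$-normal integrand, is itself $\cF_t$-measurable) makes $\Gamma_n$ an $\cF_t$-measurable random set. A standard measurable selection theorem then yields $\theta^n_{t-1}\in L^0(\R^d,\cF_t)$ with $\theta^n_{t-1}(\omega)\in\Gamma_n(\omega)$ a.s. The associated prices $V^n_{t-1}:=\theta^n_{t-1}S_{t-1}+f_{t-1}^*(-\theta^n_{t-1})$ belong to $\cP_{t-1}((S_u)_{u\le t-1})$ (take $\eta=0$) and satisfy, on $A$, $V^n_{t-1}\le -f_{t-1}^{**}(S_{t-1})+1/n$, and on $A^c$, $V^n_{t-1}\le -n$. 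Taking the essential infimum over $n$ gives $\essinf_{\cF_t}\cP_{t-1}\le -f_{t-1}^{**}(S_{t-1})$ on $A$ and $=-\infty$ on $A^c$, which in both cases coincides with $-f_{t-1}^{**}(S_{t-1})$.

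The main obstacle I expect is the measurable selection step: one must verify that $\omega\mapsto f_{t-1}^{**}(\omega,S_{t-1}(\omega))$ is $\cF_t$-measurable and that $\Gamma_n$ is indeed $\cF_t$-measurable with nonempty values, all of which rests on the joint measurability of $f_{t-1}^*$ recorded after (\ref{f}) and on the normal integrand structure inherited from the lower semicontinuity of $g_t$ and of the indicator of the conditional closure ${\rm cl}(I_t((S_u)_{u\le t-1})|\cF_t)$. Once this is in place, the equality $p_{t-1}=-f_{t-1}^{**}(S_{t-1})$ follows from combining the two inequalities above.
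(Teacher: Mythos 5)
Your argument is correct, and it reaches the formula by a somewhat different route than the paper. The paper's proof is a chain of equalities: after the change of variable $\theta\mapsto-\theta$ and passing from $\essinf_{\cF_t}$ to $-\esssup_{\cF_t}$, it invokes the machinery of \cite[Theorem 2.8]{CL} (decomposability of the family $\{\theta S_{t-1}-f_{t-1}^*(\theta):\theta\in L^0({\rm Dom\,}f_{t-1}^*,\cF_t)\}$) to identify the conditional essential supremum over $\cF_t$-measurable selections of ${\rm Dom\,}f_{t-1}^*$ with the pointwise supremum over $\overline{{\rm Dom\,}}f_{t-1}^*$, and then uses lower semicontinuity of $f_{t-1}^*$ to replace that set by $\R^d$, yielding $-f_{t-1}^{**}(S_{t-1})$. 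You instead prove the two inequalities separately: the easy bound from the pointwise biconjugate identity, and the converse by an explicit measurable selection of $1/n$-optimal strategies from the random sets $\Gamma_n$, with a separate treatment of the set where the infimum is $-\infty$. In effect you re-derive, in this special case, the essential-supremum-versus-pointwise-supremum step that the paper imports from \cite{CL}; what your version buys is self-containedness (only joint measurability of $f_{t-1}^*$, the normal-integrand property of the conjugate, and a standard selection theorem are needed), while the paper's version is shorter and makes transparent why the closure of the domain, and then $\R^d$, can be used, which is exactly where the biconjugate appears. Your measurability claim for $\omega\mapsto f_{t-1}^{**}(\omega,S_{t-1}(\omega))$ should be justified by the fact that conjugation preserves normal integrands (see \cite{rw}), rather than by a ``lower envelope'' argument, but this is a matter of wording, not of substance.
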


{\sl Proof.} This is a consequence of the following chain of equalities:
\bea \nonumber p_{t-1}((S_u)_{u\le t-1})&=&\essinf_{\cF_{t}} \left\{ \theta_{t-1}S_{t-1}+f_{t-1}^*(-\theta_{t-1}):~ \theta_{t-1}\in L^0(\R^d,\cF_{t})\right\},\\ \nonumber
&=&\essinf_{\cF_{t}}\left\{ -\theta_{t-1}S_{t-1}+f_{t-1}^*(\theta_{t-1}):~ \theta_{t-1}\in L^0(\R^d,\cF_{t})\right\},\\ \nonumber
&=&-\esssup_{\cF_{t}} \left\{ \theta_{t-1}S_{t-1}-f_{t-1}^*(\theta_{t-1}):~ \theta_{t-1}\in L^0(\R^d,\cF_{t})\right\},\\ \nonumber
&=&-\esssup_{\cF_{t}} \left\{ \theta_{t-1}S_{t-1}-f_{t-1}^*(\theta_{t-1}):~ \theta_{t-1}\in L^0({\rm Dom\,}f_{t-1}^*,\cF_{t})\right\},\\
\nonumber &=&-\sup_{z\in \overline{{\rm Dom\,}}f_{t-1}^*} \left( zS_{t-1}-f_{t-1}^*(z)\right),\\\nonumber
&=&-\sup_{z\in \R^d} \left( zS_{t-1}-f_{t-1}^*(z)\right),\\
&=&-f_{t-1}^{**}(S_{t-1}) \label{MR-OneStep}.
\eea
\fdem

Note that we do not need to suppose no-arbitrage conditions to establish the very general pricing formula above. It is only based on the lower-semicontinuity and measurability assumptions satisfied by the payoff $g$.

\subsection{Main properties satisfied by the one time step infimum super-hedging price}\label{MR}

The results of this section are the main contribution of our paper. They are needed to propagate the one time step pricing procedure of Section \ref{Onestep} to the multi-period case. In the following, we suppose that, for all price process $(S_u)_{u\le t-1}$, there exists $\alpha_{t-1}\in L^0(\R^d,\cF_{t})$ and $\beta_{t-1}\in  L^0(\R,\cF_{t})$ that may depend on   $(S_u)_{u\le t-1}$ such that 

\bea \label{hypothese} g_t(S_0,\cdots,S_{t-1},x)\le \alpha_{t-1}x+\beta_{t-1},\quad \forall x\in {\rm cl\,}(I_t((S_u)_{u\le t-1})|\cF_{t}).\eea
This is the case  for Asian options whose payoffs are for example of the form $k(S_0+S_1+\cdots+S_{t}-K)^+$, $k\ge 0$. By  \cite{CL}[Theorem 2.8], we know that 

\bea \label{PriceInf} && p_{t-1}((S_u)_{u\le t-1})=\\ \nonumber
&& \inf\left\{\alpha S_{t-1}+\beta:~\alpha x+\beta\ge g_t(S_0,\cdots,S_{t-1},x),\, \forall x\in {\rm cl\,}(I_t((S_u)_{u\le t-1})|\cF_{t})\right\}.\eea

We first establish the following result:\footnote{The notation $\overline{{\rm conv\,}}(A)$ designates the closed convex hull of A, i.e. the smallest convex closed set containing $A$.}

\begin{prop}\label{PropPrixInfini} Let $(S_u)_{u\le t-1}$ be a price process.  Suppose that the mapping $(\omega,z)\mapsto g_t(S_0(\omega),\cdots,S_{t-1}(\omega),z)$ is $\cF_{t}\times \cB(\R^d)$-measurable and the function $z\mapsto g_t(S_0,\cdots,S_{t-1},z)$ is l.s.c. almost surely. If $S_{t-1}\notin \overline{{\rm conv\,}}{\rm cl\,}(I_t((S_u)_{u\le t-1})|\cF_{t})$, then $p_{t-1}(((S_u)_{u\le t-1}))=-\infty$. Moreover,  $p_{t-1}((S_u)_{u\le t-1})\ge g_t(S_0,\cdots,S_{t-1},S_{t-1})$ if $S_{t-1}\in \overline{{\rm conv\,}}{\rm cl\,}(I_t((S_u)_{u\le t-1})|\cF_{t})$. At last, if $g_t(S_0,\cdots,S_{t-1},\cdot )$ is bounded from below by $m_{t-1}\in L^0(\R,\cF_{t})$ on $ {\rm cl\,}(I_t((S_u)_{u\le t-1})|\cF_{t})$,  then we have \\$p_{t-1}(((S_u)_{u\le t-1}))\ge m_{t-1}$ if $S_{t-1}\in \overline{{\rm conv\,}}{\rm cl\,}(I_t((S_u)_{u\le t-1})|\cF_{t})$.
\end{prop}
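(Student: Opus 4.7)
The proof rests on the primal representation (\ref{PriceInf}) of $p_{t-1}$, together with the running hypothesis (\ref{hypothese}) and the convexity of $g_t$ in its last argument (inherited from the Asian payoff fixed in Section~2). Throughout I abbreviate $K:={\rm cl\,}(I_t((S_u)_{u\le t-1})|\cF_t)$ and $\bar K:=\overline{{\rm conv\,}}(K)$; both are $\cF_t$-measurable random closed sets, and $\bar K$ is moreover convex.

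For the $-\infty$ statement I would work on the $\cF_t$-event $\{S_{t-1}\notin \bar K\}$ and use strict separation. Let $y^\star$ be the $\cF_t$-measurable projection of $S_{t-1}$ onto the random closed convex set $\bar K$ (measurability being the standard selection result recalled in \cite{LM}), and set $v:=y^\star-S_{t-1}\in L^0(\R^d,\cF_t)$. The projection characterisation yields $v\cdot(x-S_{t-1})\ge |v|^2>0$ for every $x\in\bar K$. Starting from the pair $(\alpha_{t-1},\beta_{t-1})$ provided by (\ref{hypothese}), consider the one-parameter family
\beqa
\alpha_\lambda:=\alpha_{t-1}+\lambda v,\qquad \beta_\lambda:=\beta_{t-1}-\lambda(v\cdot S_{t-1}+|v|^2),\qquad \lambda\ge 0.
\eeqa
A direct computation shows $\alpha_\lambda x+\beta_\lambda\ge \alpha_{t-1}x+\beta_{t-1}\ge g_t(S_0,\cdots,S_{t-1},x)$ for every $x\in K$, so each $(\alpha_\lambda,\beta_\lambda)$ is admissible in (\ref{PriceInf}); yet $\alpha_\lambda S_{t-1}+\beta_\lambda=\alpha_{t-1}S_{t-1}+\beta_{t-1}-\lambda|v|^2\to -\infty$ as $\lambda\to +\infty$, forcing $p_{t-1}=-\infty$.

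For the second assertion I fix an arbitrary admissible pair $(\alpha,\beta)$ in (\ref{PriceInf}) and extend the inequality $\alpha x+\beta\ge g_t(S_0,\cdots,S_{t-1},x)$ from $K$ to $\bar K$. By convexity of $g_t$ in its last argument the inequality survives finite convex combinations, which gives it on ${\rm conv\,}K$; approximating $x\in\bar K$ by a sequence $y_n\in{\rm conv\,}K$ with $y_n\to x$ and combining continuity of the affine left-hand side with the lower-semicontinuity of $g_t$ on the right-hand side yields the inequality on $\bar K$. Specialising to $x=S_{t-1}\in\bar K$ and taking the infimum over $(\alpha,\beta)$ delivers the stated bound. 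The third assertion follows by the same extension mechanism but is strictly easier: since $m_{t-1}$ does not depend on $x$, the inequality $\alpha x+\beta\ge m_{t-1}$ on $K$ propagates to $\bar K$ by convex combinations and closure alone (no convexity of $g_t$ is used), and then specialises at $x=S_{t-1}$.

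The step I expect to be the main obstacle is not the convex-analytic extension used for the last two assertions but the measurability required in the first one: one must produce the separating direction $v$ as a genuinely $\cF_t$-measurable random vector so that the perturbed pairs $(\alpha_\lambda,\beta_\lambda)$ sit in the admissible class $L^0(\R^d,\cF_t)\times L^0(\R,\cF_t)$ appearing in the definition (\ref{prices}) of $\cP_{t-1}$. Localising the construction to the $\cF_t$-event $\{S_{t-1}\notin\bar K\}$ (taking $v=0$ elsewhere) and invoking measurability of the projection onto a random closed convex set is what resolves this point.
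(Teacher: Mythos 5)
Your proposal is correct and follows essentially the same route as the paper: the first assertion is obtained by separating $S_{t-1}$ from $\overline{{\rm conv\,}}{\rm cl\,}(I_t((S_u)_{u\le t-1})|\cF_{t})$ with an $\cF_t$-measurable direction and perturbing the pair $(\alpha_{t-1},\beta_{t-1})$ from (\ref{hypothese}) to drive the admissible affine value at $S_{t-1}$ to $-\infty$, and the last two assertions follow by propagating the affine domination from ${\rm cl\,}(I_t((S_u)_{u\le t-1})|\cF_{t})$ to its closed convex hull via convex combinations, closure and lower-semicontinuity, exactly as in the paper. The only cosmetic difference is that you build the separating vector from the metric projection onto the random closed convex set rather than invoking Hahn--Banach plus a measurable selection, which settles the same measurability point the paper handles abstractly.
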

{\sl Proof.} Suppose that $S_{t-1}\notin \overline{{\rm conv\,}}{\rm cl\,}(I_t|\cF_{t})$ where $I_t=I_t((S_u)_{u\le t-1})$. By the Hahn-Banach separation theorem and a measurable selection argument, there exists a non null  $\alpha^*_{t-1}$ in $L^0(\R^d\setminus \{0\},\cF_{t})$ and $c^1_{t-1},c^2_{t-1}\in L^0(\R^d,\cF_{t})$  such that we have the inequality  $\alpha^*_{t-1}y < c^1_{t-1}<c^2_{t-1}< \alpha^*_{t-1} S_{t-1}$ for all $y\in  {\rm cl\,}(I_t|\cF_{t})$. Multiplying the inequality by a sufficiently large positive multiplier, we may suppose that $\alpha^*_{t-1}( S_{t-1}- y)\ge n$ where $n\in \mathbb{N}$ is arbitrarily chosen. Let us introduce $\tilde \alpha_{t-1}=\alpha_{t-1}-\alpha^*_{t-1}$ and $\tilde \beta_{t-1}^n=\beta_{t-1}+\alpha^*_{t-1} S_{t-1}-n$, $n\ge 1$. By construction, $\alpha_{t-1}x+\beta_{t-1}\le \tilde\alpha_{t-1}x+\tilde\beta_{t-1}^n$ for all  $x\in  {\rm cl\,}(I_t|\cF_{t})$, where $\alpha_{t-1},\beta_{t-1}$ are given in (\ref{hypothese}). It follows that $\tilde\alpha_{t-1}x+\tilde\beta_{t-1}^n\ge g_t(S_1,\cdots,S_{t-1},x),$ for every  $x\in  {\rm cl\,}(I_t|\cF_{t})$. By (\ref{PriceInf}), we deduce that $p_{t-1}\le \tilde\alpha_{t-1}S_{t-1}+\tilde\beta_{t-1}^n$, i.e. $p_{t-1}\le \alpha_{t-1}+\beta_{t-1}-n$. As $n\to \infty$, we deduce that $p_{t-1}=-\infty$.

Suppose that $z\mapsto g_t(S_1,\cdots,S_{t-1},z)$ is a.s. convex and, furthermore, $S_{t-1}\in \overline{{\rm conv\,}}{\rm cl\,}(I_t|\cF_{t})$. By (\ref{PriceInf}), 
$$p_{t-1}((S_u)_{u\le t-1})\ge g_t(S_0,\cdots,S_{t-1},S_{t-1}).$$ 

At last, suppose that  $z\mapsto g_t(S_0,S_1,\cdots,S_{t-1},z)$ is bounded from below by $m_{t-1}\in L^0(\R,\cF_{t})$ on $ {\rm cl\,}(I_t|\cF_{t})$ and $S_{t-1}\in \overline{{\rm conv\,}} {\rm cl\,}(I_t|\cF_{t})$. Then, $S_{t-1}=\lim_{n\to \infty} S_n$ where $S_n\in {\rm conv\,} {\rm cl\,}(I_t|\cF_{t})$, i.e. $S_n=\sum_{i=1}^{J_n}\lambda_{i,n}x_{i,n}$ where $\lambda_{i,n}\ge 0$  with   $\sum_{i=1}^{J_n}\lambda_{i,n}=1$ and $x_{i,n}\in  {\rm cl\,}(I_t|\cF_{t})$ for all $i,n$. Consider $(\alpha,\beta)$ such that $\alpha x+\beta\ge g_t(S_0,\cdots,S_{t-1},x)$ for all $x\in {\rm cl\,}(I_t|\cF_{t})$. Then, $\alpha S_{t-1}+\beta=\lim_{n\to \infty}(\alpha S_n+\beta)$ with 
\bean \alpha S_n+\beta&=&\sum_{i=1}^{J_n} \lambda_{i,n}(\alpha x_{i,n}+\beta)\ge \sum_{i=1}^{J_n} \lambda_{i,n}  g_t(S_1,\cdots,S_{t-1},x_{i,n})\\
&\ge & m_{t-1}.\eean
We deduce that $\alpha S_{t-1}+\beta\ge m_{t-1}$ hence $p_{t-1}\ge m_{t-1}$ by (\ref{PriceInf}). \fdem

\begin{coro}\label{coro-AIP-first}Let $(S_u)_{u\le t-1}$ be a price process.  Suppose that the mapping $(\omega,z)\mapsto g_t(S_0(\omega),\cdots,S_{t-1}(\omega),z)$ is $\cF_{t}\times \cB(\R^d)$-measurable and the function  $z\mapsto g_t(S_0,\cdots,S_{t-1},z)$ is l.s.c. a.s. and  convex or bounded from below by $m_{t-1}\in L^0(\R,\cF_{t})$ on $ {\rm cl\,}(I_t((S_u)_{u\le t-1})|\cF_{t})$. Then, $p_{t-1}((S_u)_{u\le t-1})\ne -\infty$ if and only if $S_{t-1}\in \overline{{\rm conv\,}} {\rm cl\,}(I_t((S_u)_{u\le t-1})|\cF_{t})$. In particular, the infimum super-hedging price of any non negative payoff function is finite if and only if it is non negative or equivalently if $S_{t-1}\in \overline{{\rm conv\,}} {\rm cl\,}(I_t(S_u)_{u\le t-1}|\cF_{t})$.
\end{coro}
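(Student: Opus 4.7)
The plan is to obtain this corollary as an immediate consequence of Proposition \ref{PropPrixInfini}, since all three substantive assertions needed here are already contained in that proposition. The argument is purely a matter of assembling the pieces and handling the ``in particular'' specialization to non-negative payoffs.

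First I would prove the equivalence $p_{t-1}((S_u)_{u\le t-1})\ne -\infty \Leftrightarrow S_{t-1}\in \overline{\rm conv\,}{\rm cl\,}(I_t((S_u)_{u\le t-1})|\cF_{t})$. The implication from left to right is just the contrapositive of the first statement of Proposition \ref{PropPrixInfini}: if $S_{t-1}$ lies outside $\overline{\rm conv\,}{\rm cl\,}(I_t|\cF_{t})$, then $p_{t-1}=-\infty$, so $p_{t-1}\ne -\infty$ forces $S_{t-1}$ into this set. For the converse, under either hypothesis on $g_t$ the proposition provides a finite lower bound for $p_{t-1}$ once $S_{t-1}$ lies in $\overline{\rm conv\,}{\rm cl\,}(I_t|\cF_{t})$: in the convex case we get $p_{t-1}\ge g_t(S_0,\cdots,S_{t-1},S_{t-1})$, which is $\cF_t$-measurable and real-valued (using the a.s.\ lower-semicontinuity together with the linear upper bound (\ref{hypothese}) to exclude both infinite values at the point $S_{t-1}$ belonging to the effective domain), and in the boundedness case we directly get $p_{t-1}\ge m_{t-1}\in L^0(\R,\cF_{t})$. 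Either way $p_{t-1}>-\infty$.

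For the ``in particular'' part, I would observe that any non-negative payoff $g_t\ge 0$ automatically satisfies the boundedness hypothesis with $m_{t-1}=0$. Applying the equivalence just established and the lower bound $p_{t-1}\ge 0$ of Proposition \ref{PropPrixInfini} then yields the chain: $p_{t-1}$ is finite iff $S_{t-1}\in \overline{\rm conv\,}{\rm cl\,}(I_t|\cF_{t})$, and in that case $p_{t-1}\ge 0$, which is exactly the statement that finiteness is equivalent to non-negativity. This connects the geometric condition $S_{t-1}\in \overline{\rm conv\,}{\rm cl\,}(I_t|\cF_{t})$ to the AIP-type condition of the paper.

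There is no real obstacle here beyond bookkeeping; the only mild subtlety is ensuring that, in the convex branch, $g_t(S_0,\cdots,S_{t-1},S_{t-1})$ is a genuine real-valued random variable so that the bound $p_{t-1}\ge g_t(S_0,\cdots,S_{t-1},S_{t-1})$ rules out $-\infty$. This is handled by the standing l.s.c.\ assumption on $g_t$ together with the affine majorant in (\ref{hypothese}) evaluated at $x=S_{t-1}\in {\rm cl\,}(I_t|\cF_{t})$ when that set contains $S_{t-1}$, or by a limiting argument along the convex combinations $S_n\to S_{t-1}$ used in the proof of Proposition \ref{PropPrixInfini} when $S_{t-1}$ lies only in the closed convex hull.
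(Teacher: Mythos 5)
Your proposal is correct and matches the paper's intent exactly: the corollary is stated without proof precisely because it is the immediate assembly of the three assertions of Proposition \ref{PropPrixInfini} (the $-\infty$ case outside the closed convex hull, and the lower bounds $g_t(S_0,\cdots,S_{t-1},S_{t-1})$ resp.\ $m_{t-1}$ inside it), together with the observation that a non-negative payoff satisfies the boundedness hypothesis with $m_{t-1}=0$. Your extra care about the finiteness of $g_t(S_0,\cdots,S_{t-1},S_{t-1})$ in the convex branch is a reasonable bookkeeping point but not an issue, since the payoff is a real-valued function under the paper's standing assumptions.
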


As studied in \cite{CL}, the non negativity of the prices for the zero claim or more generally for non negative European call options corresponds to a weak no arbitrage condition (AIP) which is naturally observed in practice. Adapted to our setting, we introduce the following definition:

\begin{defi}
We say that condition AIP holds between $t-1$ and $t$ if the prices at time $t-1$ of the time $t$ zero claim is non negative for every  price process $(S_{u})_{u\le t-1}$. Moreover, we say that the condition AIP holds when AIP holds at any time step.
\end{defi}

As observed in \cite{CL} and above, when AIP fails, the infimum of the zero claim, and more generally of non negative payoffs, may be $-\infty$. In that case, the numerical procedure we develop in this paper is still valid but unrealistic and non-implementable in practice.  By Corollary \ref{coro-AIP-first}, we have:

\begin{coro} \label{coroAIP} The condition AIP holds between $t-1$ and $t$ if and only if $S_{t-1}\in \overline{{\rm conv\,}} {\rm cl\,}(I_t((S_u)_{u\le t-1})|\cF_{t})$ for any price process $(S_u)_{u\le t-1}$, i.e. $$I_{t-1}((S_u)_{u\le t-2})\subseteq \overline{{\rm conv\,}} {\rm cl\,}(I_t(S_u)_{u\le t-1}|\cF_{t}), \,t\ge 1.$$
\end{coro}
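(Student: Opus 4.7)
The corollary is essentially a repackaging of Corollary~\ref{coro-AIP-first} applied to the zero payoff. The first step is to observe that the constant function $g_t\equiv 0$ is l.s.c., convex, and bounded from below by $m_{t-1}=0\in L^0(\R,\cF_t)$ on ${\rm cl\,}(I_t((S_u)_{u\le t-1})|\cF_t)$, so it satisfies the hypotheses of Corollary~\ref{coro-AIP-first}. By Proposition~\ref{PropPrixInfini}, the corresponding infimum super-hedging price $p_{t-1}((S_u)_{u\le t-1})$ is either $-\infty$ (when $S_{t-1}\notin \overline{{\rm conv\,}}{\rm cl\,}(I_t((S_u)_{u\le t-1})|\cF_t)$) or $\ge 0$ (when $S_{t-1}$ belongs to this closed convex hull). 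In particular $p_{t-1}\ge 0$ holds if and only if $p_{t-1}\ne -\infty$, which by Corollary~\ref{coro-AIP-first} happens exactly when $S_{t-1}\in\overline{{\rm conv\,}}{\rm cl\,}(I_t((S_u)_{u\le t-1})|\cF_t)$.

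Unfolding the definition of AIP between $t-1$ and $t$: it says precisely that $p_{t-1}((S_u)_{u\le t-1})\ge 0$ for every admissible price process $(S_u)_{u\le t-1}$. Combining this with the equivalence just derived yields the first form of the statement:
\beqa
\mbox{AIP between } t-1 \mbox{ and } t \iff S_{t-1}\in\overline{{\rm conv\,}}{\rm cl\,}(I_t((S_u)_{u\le t-1})|\cF_t) \mbox{ for every price process.}
\eeqa

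The second step is to translate the pointwise membership condition on every measurable selector into the set inclusion $I_{t-1}((S_u)_{u\le t-2})\subseteq\overline{{\rm conv\,}}{\rm cl\,}(I_t((S_u)_{u\le t-1})|\cF_t)$. Recall that by construction $S_{t-1}$ ranges over $L^0(I_{t-1}((S_u)_{u\le t-2}),\cF_t)$, i.e., over all $\cF_t$-measurable selectors of the closed random set $I_{t-1}$. The inclusion trivially implies that every such selector lies in the right hand set a.s. Conversely, if the inclusion failed, there would exist a set $F\in\cF_t$ with $P(F)>0$ on which the random set $I_{t-1}\setminus\overline{{\rm conv\,}}{\rm cl\,}(I_t|\cF_t)$ is nonempty; a standard measurable selection argument (\cite[Theorem 2.4]{LM}) then produces an $\cF_t$-measurable selector $S_{t-1}$ of $I_{t-1}$ which lies outside $\overline{{\rm conv\,}}{\rm cl\,}(I_t|\cF_t)$ on $F$, contradicting the pointwise condition. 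This gives the announced reformulation for each $t\ge 1$, and hence the corollary.

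The only mildly delicate point is this last measurable selection step, but since both $I_{t-1}$ and $\overline{{\rm conv\,}}{\rm cl\,}(I_t((S_u)_{u\le t-1})|\cF_t)$ are $\cF_t$-measurable closed random sets, the difference $I_{t-1}\setminus\overline{{\rm conv\,}}{\rm cl\,}(I_t|\cF_t)$ has an $\cF_t$-measurable graph and the Kuratowski--Ryll-Nardzewski selection theorem (or its version in \cite[Theorem 2.4]{LM}) applies directly. All other steps are just unwinding of the definitions.
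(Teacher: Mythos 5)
Your proof is correct and follows exactly the route the paper intends: the paper gives no separate argument for Corollary \ref{coroAIP} beyond the phrase ``By Corollary \ref{coro-AIP-first}, we have:'', and your application of that corollary (and of Proposition \ref{PropPrixInfini}) to the zero payoff is precisely that reasoning. The additional measurable-selection step you supply to pass from ``every selector $S_{t-1}$ of $I_{t-1}$ lies in $\overline{{\rm conv\,}}{\rm cl\,}(I_t|\cF_{t})$ a.s.'' to the set inclusion is a detail the paper leaves implicit, and it is handled correctly.
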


In the following, if $g$ is a function defined on $\R^d$ and $D$ is a subset of $\R^d$, we denote by ${\rm conc}(g,D)$ the (relative) concave envelope of $g$ on $D$, i.e. the smallest concave function defined on $\R^d$ which dominates $g$ only on $D$. Observe that $g\le h$ on $D$ is equivalent to $g-\delta_D\le h$ on $\R^d$. Therefore, ${\rm conc}(g,D)$ always exists as soon as $g$ is dominated by an affine function on $D$. 

The following result allows us to compute the infimum price rather easily.
\begin{lemm}\label{conc}Let $(S_u)_{u\le t-1}$ be a price process.  Suppose that the mapping $(\omega,z)\mapsto g_t(S_0(\omega),\cdots,S_{t-1}(\omega),z)$ is $\cF_{t}\times \cB(\R^d)$-measurable and the function $z\mapsto g_t(S_0,\cdots,S_{t-1},z)$ is l.s.c. almost surely. Consider the concave envelope $$h_{t-1}(x)= {\rm conc}\left(g_t(S_0,\cdots,S_{t-1},\cdot),{\rm cl\,}(I_t((S_u)_{u\le t-1})|\cF_{t})\right)(x).$$ Then, \bea \label{aux-pt-1} && p_{t-1}((S_u)_{u\le t-1})\\ \nonumber
&&=\inf\left\{\alpha S_{t-1}+\beta:~\alpha x+\beta\ge h_{t-1}(x),\, {\rm for\,all\,} x\in {\rm cl\,}(I_t((S_u)_{u\le t-1})|\cF_{t})\right\}.
\eea
\end{lemm}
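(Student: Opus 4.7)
The plan is to argue that replacing $g_t(S_0,\ldots,S_{t-1},\cdot)$ by its (relative) concave envelope $h_{t-1}$ in formula (\ref{PriceInf}) does not change the value of the infimum, because the set of affine minorants above $g_t$ on $D:={\rm cl\,}(I_t((S_u)_{u\le t-1})|\cF_{t})$ coincides with the set of affine minorants above $h_{t-1}$ on $D$. Under the standing assumption (\ref{hypothese}), $g_t(S_0,\ldots,S_{t-1},\cdot)$ is dominated by an affine function on $D$, so $h_{t-1}$ is well-defined (as noted in the paragraph preceding the lemma). The starting point is then the representation (\ref{PriceInf}).

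First, I would check the inclusion: if $(\alpha,\beta)\in L^0(\R^d,\cF_t)\times L^0(\R,\cF_t)$ satisfies $\alpha x+\beta\ge g_t(S_0,\ldots,S_{t-1},x)$ for all $x\in D$, then since $x\mapsto \alpha x+\beta$ is concave on $\R^d$ and dominates $g_t(S_0,\ldots,S_{t-1},\cdot)$ on $D$, the minimality in the definition of the relative concave envelope yields $\alpha x+\beta\ge h_{t-1}(x)$ for every $x\in \R^d$, hence in particular for every $x\in D$. Conversely, if $\alpha x+\beta\ge h_{t-1}(x)$ on $D$, then using that $h_{t-1}\ge g_t(S_0,\ldots,S_{t-1},\cdot)$ on $D$ by construction, we obtain $\alpha x+\beta\ge g_t(S_0,\ldots,S_{t-1},x)$ on $D$.

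Hence the two sets of admissible pairs $(\alpha,\beta)$ in (\ref{PriceInf}) and (\ref{aux-pt-1}) coincide, and taking the infimum of the $\cF_t$-measurable quantity $\alpha S_{t-1}+\beta$ over this common set gives the equality (\ref{aux-pt-1}). I expect no real obstacle here: the argument is a direct consequence of (\ref{PriceInf}) combined with the defining minimality property of the relative concave envelope. The only point requiring mild care is that the concavity used in the first inclusion is that of the affine map on the whole space $\R^d$, which is what allows one to invoke the definition of ${\rm conc}(g_t(S_0,\ldots,S_{t-1},\cdot),D)$ as the smallest concave function on $\R^d$ dominating $g_t$ on $D$.
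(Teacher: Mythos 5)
Your argument is correct and is essentially the paper's own proof: both rest on the observation that, since $h_{t-1}$ is by definition the smallest concave function dominating $g_t(S_0,\cdots,S_{t-1},\cdot)$ on ${\rm cl\,}(I_t((S_u)_{u\le t-1})|\cF_{t})$ and affine functions are concave, the admissible pairs $(\alpha,\beta)$ in (\ref{PriceInf}) and in (\ref{aux-pt-1}) coincide. The only cosmetic remark is that you should call these affine \emph{majorants} rather than ``minorants above'' $g_t$; the substance of the two inclusions you verify is exactly what the paper asserts in one line.
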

\begin{proof}
By definition, $h_{t-1}$ is the smallest concave function which dominates $g$. We deduce that the set of all affine functions dominating $g$ coincides with the set of all affine functions dominating $h_{t-1}$. By (\ref{PriceInf}) we deduce that (\ref{aux-pt-1}) holds. 
\end{proof}

The following result provides a criterion under which the infimum price is a price:

\begin{prop} \label{Bound+infprice=price}Suppose that $AIP$ holds. Let $(S_u)_{u\le t-1}$ be a price process. Suppose that the mapping $(\omega,z)\mapsto g_t(S_0(\omega),\cdots,S_{t-1}(\omega),z)$ is $\cF_{t}\times \cB(\R^d)$-measurable and $z\mapsto g_t(S_0,\cdots,S_{t-1},z)$ is l.s.c. almost surely. Moreover, suppose that there exists $\alpha_{t-1}\in L^0(\R^d,\cF_{t})$ and $\beta_{t-1}\in L^0(\R,\cF_{t})$ such that $g_t(S_0,\cdots,S_{t-1},z)\le \alpha_{t-1}z+\beta_{t-1}$ for all $z\in \overline{{\rm conv\,}}{\rm cl\,}(I_t((S_u)_{u\le t-1})|\cF_{t})$ and consider the concave envelope
\bea \label{h} h_{t-1}(x)= {\rm conc}\left(g_t(S_0,\cdots,S_{t-1},\cdot),{\rm cl\,}(I_t((S_u)_{u\le t-1})|\cF_{t})\right)(x).
\eea

We have $p_{t-1}((S_u)_{u\le t-1})\in[g_t(S_0,\cdots,S_{t-1},S_{t-1}), \alpha_{t-1}S_{t-1}+\beta_{t-1}]$. Moreover, if the super-differential $\partial h_{t-1}(S_{t-1})\neq \emptyset$ , then  $p_{t-1}((S_u)_{u\le t-1})=h_{t-1}(S_{t-1})$ is a price, i.e. $p_{t-1}((S_u)_{u\le t-1})\in \cP_{t-1}((S_u)_{u\le t-1})$ with the super-replicating strategies $\theta_{t-1}\in \partial h_{t-1}(S_{t-1})$.

\end{prop}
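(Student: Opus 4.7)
The plan is to sandwich $p_{t-1}$ between $g_t(S_0,\ldots,S_{t-1},S_{t-1})$ and $\alpha_{t-1}S_{t-1}+\beta_{t-1}$, and then to upgrade to the equality $p_{t-1}=h_{t-1}(S_{t-1})$ via Lemma \ref{conc} as soon as a supergradient at $S_{t-1}$ is available. Note first that AIP together with Corollary \ref{coroAIP} places $S_{t-1}$ in $\overline{{\rm conv\,}}{\rm cl\,}(I_t((S_u)_{u\le t-1})|\cF_{t})$, which will be used repeatedly.

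For the upper bound in the sandwich, the hypothesis $g_t(S_0,\ldots,S_{t-1},z)\le \alpha_{t-1}z+\beta_{t-1}$ on $\overline{{\rm conv\,}}{\rm cl\,}(I_t|\cF_{t})$ (and, a fortiori, on ${\rm cl\,}(I_t|\cF_{t})$) exhibits $(\alpha_{t-1},\beta_{t-1})$ as an admissible pair in the representation (\ref{PriceInf}); plugging this pair in immediately gives $p_{t-1}\le \alpha_{t-1}S_{t-1}+\beta_{t-1}$. For the lower bound $p_{t-1}\ge g_t(S_0,\ldots,S_{t-1},S_{t-1})$, I would invoke directly the corresponding assertion of Proposition \ref{PropPrixInfini}, whose hypotheses on $g_t$ (l.s.c., measurability, and the inclusion $S_{t-1}\in\overline{{\rm conv\,}}{\rm cl\,}(I_t|\cF_{t})$) are met in the present setting.

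Next, assume $\partial h_{t-1}(S_{t-1})\neq\emptyset$ and pick $\theta\in\partial h_{t-1}(S_{t-1})$. Since $h_{t-1}$ is concave, the supergradient inequality yields
$$h_{t-1}(x)\le h_{t-1}(S_{t-1})+\theta(x-S_{t-1}),\qquad x\in\R^d,$$
so the affine function $x\mapsto \theta x+\bigl(h_{t-1}(S_{t-1})-\theta S_{t-1}\bigr)$ dominates $h_{t-1}$ everywhere and, since $h_{t-1}\ge g_t(S_0,\ldots,S_{t-1},\cdot)$ on ${\rm cl\,}(I_t|\cF_{t})$, dominates $g_t$ there as well. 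Substituting this pair in (\ref{PriceInf}) gives $p_{t-1}\le h_{t-1}(S_{t-1})$, and by construction the value is attained with strategy $\theta_{t-1}=\theta$, so $h_{t-1}(S_{t-1})\in\cP_{t-1}((S_u)_{u\le t-1})$.

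For the reverse inequality I would use Lemma \ref{conc}: given any $(\alpha,\beta)$ such that $\alpha x+\beta\ge h_{t-1}(x)$ (equivalently, $\ge g_t(S_0,\ldots,S_{t-1},x)$) on ${\rm cl\,}(I_t|\cF_{t})$, the affine map $\alpha x+\beta$ is itself concave and dominates $g_t(S_0,\ldots,S_{t-1},\cdot)$ on ${\rm cl\,}(I_t|\cF_{t})$. The minimality of the concave envelope then forces $\alpha x+\beta\ge h_{t-1}(x)$ on all of $\R^d$, and evaluating at $x=S_{t-1}$ gives $\alpha S_{t-1}+\beta\ge h_{t-1}(S_{t-1})$; taking the infimum yields $p_{t-1}\ge h_{t-1}(S_{t-1})$, hence $p_{t-1}=h_{t-1}(S_{t-1})$. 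The step I expect to be most delicate is the lower bound $p_{t-1}\ge g_t(S_0,\ldots,S_{t-1},S_{t-1})$, because the proof of Proposition \ref{PropPrixInfini} actually hinges on convexity or lower-boundedness of $g_t$; one must be sure such a property is in force (which it is for Asian payoffs like $k(S_0+\cdots+S_t-K)^+$, being both convex and non-negative), so that the bound really does apply in our setting.
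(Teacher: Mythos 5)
Your argument follows the paper's proof essentially step for step. The identity $p_{t-1}((S_u)_{u\le t-1})=h_{t-1}(S_{t-1})$ is obtained exactly as in the paper: the upper bound comes from the affine function $x\mapsto h_{t-1}(S_{t-1})+\theta(x-S_{t-1})$ built from a supergradient $\theta\in\partial h_{t-1}(S_{t-1})$, which dominates $h_{t-1}$ everywhere and hence $g_t(S_0,\cdots,S_{t-1},\cdot)$ on ${\rm cl\,}(I_t((S_u)_{u\le t-1})|\cF_{t})$; the lower bound $p_{t-1}\ge h_{t-1}(S_{t-1})$ comes from the fact that any affine function dominating $g_t$ on ${\rm cl\,}(I_t|\cF_{t})$ dominates the concave envelope on all of $\R^d$, which is the content of Lemma \ref{conc}. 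Your caveat about the sandwich's lower bound $p_{t-1}\ge g_t(S_0,\cdots,S_{t-1},S_{t-1})$ is well taken and consistent with the paper: the proof of Proposition \ref{PropPrixInfini} does rely on convexity or lower-boundedness of $g_t$ (or on $S_{t-1}\in{\rm cl\,}(I_t|\cF_{t})$, so that $h_{t-1}(S_{t-1})\ge g_t(S_0,\cdots,S_{t-1},S_{t-1})$), and the paper is subject to the same remark.

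The one step you pass over is the measurability of the strategy. Membership in $\cP_{t-1}((S_u)_{u\le t-1})$ as defined in (\ref{prices}) requires $\theta_{t-1}\in L^0(\R^d,\cF_{t})$, and ``pick $\theta\in\partial h_{t-1}(S_{t-1})$'' is an $\omega$-by-$\omega$ choice that does not by itself produce an $\cF_{t}$-measurable random variable. The paper closes this by introducing the random set $G_t$ of pairs $(\omega,r_{t-1})$ for which the affine function $\delta(r_{t-1},\cdot)$ dominates $g_t(S_0,\cdots,S_{t-1},\cdot)$ on $\overline{{\rm conv\,}}{\rm cl\,}(I_t|\cF_{t})$, rewriting this condition as a countable family of inequalities along a Castaing representation $(\gamma_t^n)_{n\ge 1}$ (which is where the l.s.c.\ of $g_t$ is used), checking that $G_t$ is $\cF_{t}\otimes\cB(\R^d)$-measurable and a.s.\ nonempty, and then invoking a measurable selection theorem. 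You should add this (or an equivalent measurable-selection argument for the superdifferential) to fully justify the final assertion that $p_{t-1}((S_u)_{u\le t-1})$ is a price attained by an admissible strategy.
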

{\sl Proof.}
It is clear by Lemma \ref{conc} that $p_{t-1}((S_u)_{u\le t-1})\ge h(S_{t-1})$ when $S_{t-1}$ belongs to ${\rm cl\,}(I_t((S_u)_{u\le t-1})|\cF_{t})$.  By definition, for all $r_{t-1}\in \partial h(S_{t-1}) \neq \emptyset$, for all $x\in \overline{{\rm conv\,}}{\rm cl\,}(I_t(S_u)_{u\le t-1}|\cF_{t})$,
\bea \label{i1} h(x)\le h(S_{t-1})+r_{t-1}(x-S_{t-1})=:\delta(r_{t-1},x).\eea  Therefore, $p_{t-1}((S_u)_{u\le t-1})\le \delta(r_{t-1},S_{t-1})=h(S_{t-1})$, and finally 
$$p_{t-1}((S_u)_{u\le t-1})=h(S_{t-1}).$$

At last, applying (\ref{i1}) with $x=S_t\in I_t((S_u)_{u\le t-1}) \subseteq {\rm cl\,}(I_t((S_u)_{u\le t-1})|\cF_{t})$, we deduce  that 
$$p_{t-1}((S_u)_{u\le t-1})+r_{t-1}\Delta S_t\ge h(S_t)\ge g_t(S_0,\cdots,S_{t-1},S_t).$$
 Since $x\mapsto g_t(S_0,\cdots,S_{t-1},x)$ is l.s.c., we consider the   following random  set:
\bean G_{t}&:=&\{ (\omega,r_{t-1}):~\delta(r_{t-1},x)\ge g_t(S_0,\cdots,S_{t-1},x),\,\forall x\in \overline{{\rm conv\,}}{\rm cl\,}(I_t(S_u)_{u\le t-1}|\cF_{t})\},\\
&=&\{ (\omega,r_{t-1}):~\delta(r_{t-1},\gamma_{t}^n)\ge g_t(S_0,\cdots,S_{t-1},\gamma_{t}^n),\,\forall n \in \N \},
\eean
where $(\gamma_{t}^n)_{n \geq 1}$ is a Castaing representation of  $\overline{{\rm conv\,}}{\rm cl\,}(I_t(S_u)_{u\le t-1}|\cF_{t})$. Since $G_{t}$ is $\cF_{t}\times \cB(\R^d)$-measurable and $G_{t} \neq \emptyset$ a.s, it admits a measurable selection which is a measurable strategy $\theta_{t}$ for the price $p_{t-1}((S_u)_{u\le t-1})$. \fdem

\begin{rem}
As the function $h_{t-1}$ in (\ref{h}) is concave and finite a.s. on the conditional closure $\overline{{\rm conv\,}}{\rm cl\,}(I_t(S_u)_{u\le t-1}|\cF_{t})$, see proof of Proposition \ref{PropPrixInfini}, the super-differential $\partial h(S_{t-1})$ of $h_{t-1}$ at the point $S_{t-1}$ is not empty when $S_{t-1}$ belongs to the interior of $\overline{{\rm conv\,}}{\rm cl\,}(I_t(S_u)_{u\le t-1}|\cF_{t})$.
\end{rem}

The following result proves the measurability of the infimum super-hedging price $p_{t-1}((S_u)_{u \leq t-1}) $ with respect to $(S_u)_{u \leq t-1}$. To do so, we suppose the existence of a Castaing representation, see \cite{rw}, \cite{LM}.

\begin{prop} \label{MeasPrice} Suppose that  ${\rm cl\,}(I_t((S_u)_{u \leq t-1})|\cF_{t})$ admits a  Castaing representation $(\xi^m_{t})_{m\ge 1}$ where $\xi^m_{t}=x^m((S_u)_{u \leq t-1})$, for all $m\ge 1$, and $x^m$ are Borel functions on $(\R^d)^t$ independent of $(S_u)_{u \leq t-1}$. Then,  there exist a  Borel function $ \phi_{t-1}$ on  $(\R^d)^t$ such that  $p_{t-1}((S_u)_{u\le t-1})=\phi_{t-1}((S_u)_{u\le t-1}).$
\end{prop}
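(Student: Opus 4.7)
The plan is to express $p_{t-1}((S_u)_{u\le t-1})$ as a countable infimum of Borel functions of $(S_u)_{u\le t-1}$, using the dual formula (\ref{PriceInf}) and the Castaing representation hypothesis. Since $z\mapsto g_t(S_0,\ldots,S_{t-1},z)$ is l.s.c. and $z\mapsto \alpha z+\beta$ is continuous, the set $\{z:\alpha z+\beta\ge g_t(S_0,\ldots,S_{t-1},z)\}$ is closed. Requiring the inequality at every point of ${\rm cl\,}(I_t((S_u)_{u\le t-1})|\cF_{t})$ is therefore equivalent, by density, to requiring it at every point of the Castaing sequence $(\xi^m_t)_{m\ge 1}$. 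Substituting $\xi^m_t = x^m((S_u)_{u \le t-1})$ recasts (\ref{PriceInf}) as the deterministic formula
\[
p_{t-1}((s_u)) = \inf\!\left\{\alpha s_{t-1} + \beta :\ \alpha x^m((s_u)) + \beta \ge g_t(s_0,\ldots,s_{t-1}, x^m((s_u))),\ \forall m \ge 1\right\},
\]
whose only inputs are the Borel maps $x^m$ and $g_t$ and the deterministic vector $(s_u)=(s_u)_{u\le t-1}$.

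Next, introduce the parametric multifunction $\Gamma:(\R^d)^t \to 2^{\R^d \times \R}$ given by
\[
\Gamma((s_u)) = \bigcap_{m\ge 1}\{(\alpha,\beta) : \alpha x^m((s_u)) + \beta \ge g_t(s_0,\ldots,s_{t-1}, x^m((s_u)))\}.
\]
Each half-space is Borel in $((s_u),\alpha,\beta)$ since $x^m$ and $g_t$ are Borel and the remaining operations (scalar product, comparison) are continuous; the countable intersection is therefore Borel, so $\Gamma$ has a Borel graph. Moreover, $\Gamma((s_u))$ is closed and convex for every $(s_u)$. Applying the Castaing representation theorem for closed-valued multifunctions with Borel graph in the Polish space $\R^d \times \R$ (see \cite{rw} and \cite{LM}), one obtains Borel selectors $(\alpha_n, \beta_n)_{n\ge 1}$ whose values are dense in $\Gamma((s_u))$ on the set $\{\Gamma((s_u)) \neq \emptyset\}$.

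Finally, since the objective $(\alpha,\beta)\mapsto \alpha s_{t-1} + \beta$ is continuous, its infimum over $\Gamma((s_u))$ coincides with the infimum over any countable dense subset. Consequently,
\[
\phi_{t-1}((s_u)) := \inf_{n \ge 1}\left[\alpha_n((s_u))\, s_{t-1} + \beta_n((s_u))\right]
\]
is a countable infimum of Borel functions, hence Borel, and equals $p_{t-1}((s_u))$ on $\{\Gamma \neq \emptyset\}$; one extends by $+\infty$ on the complement. The main technical obstacle is securing genuinely Borel (rather than merely universally measurable) selectors in the Castaing representation and keeping the set $\{\Gamma \neq \emptyset\}$ Borel; both are handled by exploiting the explicit description of $\Gamma$ as a countable intersection of Borel half-spaces with closed convex values, which keeps every construction inside the Borel $\sigma$-algebra of $(\R^d)^t$.
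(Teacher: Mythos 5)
The crux of your argument is the claim that a closed-valued multifunction with Borel graph admits a Castaing representation by \emph{Borel} selectors, together with a Borel domain $\{\Gamma\neq\emptyset\}$, and this is where the proposal has a genuine gap. For closed-valued maps into a Polish space, a measurable (even Borel) graph only yields universally or analytically measurable selections via Jankov--von Neumann/Aumann; a Borel Castaing representation requires $\Gamma$ to be Borel measurable \emph{as a multifunction}, i.e. $\{s:\Gamma(s)\cap U\neq\emptyset\}$ Borel for every open $U$, which is strictly stronger than Borel graph (the hit-set is a projection of the graph and is in general only analytic). You acknowledge this obstacle, but the "handling" is asserted rather than proved: the fact that $\Gamma(s)$ is a countable intersection of half-spaces with Borel-parameterized normals does not by itself upgrade graph-measurability to measurability, and countable intersections of measurable closed-valued maps are themselves delicate without completeness of the underlying $\sigma$-algebra. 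So as written, the step producing the Borel selectors $(\alpha_n,\beta_n)$ is not justified, and this is precisely the technical content of the proposition.

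The gap is reparable, but it needs an argument: note that $\Gamma(s)$ is exactly the epigraph of $\alpha\mapsto f_{t-1}^*(-\alpha,s)=\sup_m\bigl[g_t(s,x^m(s))-\alpha x^m(s)\bigr]$, a countable supremum of Carath\'eodory functions (Borel in $s$, affine in $\alpha$); such a supremum is a normal integrand with respect to the Borel $\sigma$-algebra (see \cite{rw}), so its epigraphical mapping is closed-valued, Borel measurable, with Borel domain, and only then does the Castaing theorem give Borel selectors --- at which point one may skip the selectors altogether, since the inf-projection $\inf_{(\alpha,\beta)\in\Gamma(s)}(\alpha s_{t-1}+\beta)$ of a Carath\'eodory objective over a measurable closed-valued map is Borel. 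The paper's own proof avoids selection theory entirely by an elementary device: for a fixed slope $\alpha$ the optimal intercept is $\beta=f_{t-1}^*(-\alpha,s)$, which is Borel in $(s,\alpha)$ as a countable supremum, and the infimum over $\alpha\in\R^d$ is shown to coincide with the infimum over $\alpha\in\Q^d$ by approximating $\alpha$ componentwise from above and exploiting $x^m(s)\in\R^d_+$, so that $\phi_{t-1}$ is directly a countable infimum of Borel functions. Your opening step --- reducing the constraint from ${\rm cl\,}(I_t((S_u)_{u\le t-1})|\cF_{t})$ to the Castaing sequence using l.s.c.\ of $g_t$ and continuity of affine functions --- is correct and in fact makes explicit a reduction the paper leaves implicit; the remainder needs either the normal-integrand repair above or the paper's rational-slope argument.
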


\begin{proof}
Let $(S_u)_{u\le t-1}$ be a price process. We denote by
\[ \mathcal{S}^{(t-1)}=(S_u)_{u\le t-1} \text{ and } \mathcal{I}_{t-1}= {\rm cl\,}(I_t(\mathcal{S}^{(t-1)})|\cF_{t}). \]
Recall that $$ p_{t-1}(\mathcal{S}^{(t-1)})= \underset{(\alpha,\beta)}\inf\left\{\alpha S_{t-1}+\beta:~\alpha x+\beta\ge g_t(\mathcal{S}^{(t-1)},x),\, {\rm for\,all\,} x\in \mathcal{I}_{t-1} \right\}. $$
By assumption $x^m$ is a Borel function on $(\R^d)^{t}$ independent of the price process $(S_u)_{u \leq t-1}.$
So:
\bean
p_{t-1}(\mathcal{S}^{(t-1)})&=& \underset{(\alpha,\beta)} \inf\left\{\alpha S_{t-1}+\beta:~\alpha x^m(\mathcal{S}^{(t-1)})+\beta\ge g_t(\mathcal{S}^{(t-1)},x^m(\mathcal{S}^{(t-1)})),\forall m  \right\}\\
&=& \underset{\alpha}\inf\left\{\alpha S_{t-1}+f_{t-1}^*(-\alpha,\mathcal{S}^{(t-1)}) \right\}
\eean
such that $ f_{t-1}^*(-\alpha,\mathcal{S}^{(t-1)}) =\underset{m}\sup~ \left[ g_t(\mathcal{S}^{(t-1)},x^m(\mathcal{S}^{(t-1)}))-\alpha x^m(\mathcal{S}^{(t-1)}) \right].$

Let us denote $\Q^d= \{ \alpha^n=(\alpha_1^n,...,\alpha_d^n), n \geq 1, \alpha_i^n \in \Q\}$ and  define  the real-valued mapping $\phi_{t-1}$  as \label{eqborel} $\phi_{t-1}(\mathcal{S}^{(t-1)})= \underset{n }\inf\left\{\alpha^n S_{t-1}+f_{t-1}^*(-\alpha^n,\mathcal{S}^{(t-1)}) \right\}.$
 We claim that
\bea \label{eq*}
p_{t-1}(\mathcal{S}^{(t-1)})&=& \phi_{t-1}(\mathcal{S}^{(t-1)}).
\eea
It is clear that $p_{t-1}(\mathcal{S}^{(t-1)}) \leq  \phi_{t-1}(\mathcal{S}^{(t-1)}).$ Conversely,  let $\alpha \in \R^d,$ 	and $\alpha^n \in \Q^d$ a sequence such that for arbitrary fixed $\epsilon \in {\rm int}(\R_{+}^d)$, we have $\alpha^n \geq \alpha$ and $\alpha > \alpha^n - \epsilon$ componentwise. Then, by definition of $f_{t-1}^*$, we have:
\bean
f_{t-1}^*(-\alpha,\mathcal{S}^{(t-1)} )& \geq & g_t(\mathcal{S}^{(t-1)},x^m(\mathcal{S}^{(t-1)}))-\alpha x^m(\mathcal{S}^{(t-1)}), ~ \forall m \geq 1\\
&\ge & g_t(\mathcal{S}^{(t-1)},x^m(\mathcal{S}^{(t-1)}))-\alpha ^n x^m(\mathcal{S}^{(t-1)})\\
&&+ (\alpha ^n-\alpha )x^m(\mathcal{S}^{(t-1)}), ~ \forall m \geq 1.
\eean
Notice that $x^m(\mathcal{S}^{(t-1)}) \in \R_{+}^{d}$ because $x^m(\mathcal{S}^{(t-1)}) \in \mathcal{I}_{t-1}.$
So, 
\bean
f_{t-1}^*(-\alpha,\mathcal{S}^{(t-1)} )& \geq & g_t(\mathcal{S}^{(t-1)},x^m(\mathcal{S}^{(t-1)}))-\alpha ^n x^m(\mathcal{S}^{(t-1)}), ~ \forall m \geq 1, ~ \forall n \geq 1 \\
& \ge  & f_{t-1}^*(-\alpha^n,\mathcal{S}^{(t-1)} ),  ~ \forall n \geq 1.
\eean

Hence,
\bean
\alpha S_{t-1}+f_{t-1}^*(-\alpha )& \geq & \alpha S_{t-1}+f_{t-1}^*(-\alpha^n ), ~ \forall n \geq 1\\
& \geq & \alpha^n S_{t-1}+f_{t-1}^*(-\alpha^n )- \epsilon S_{t-1},~ \forall n \geq 1 \\
& \geq & \alpha^n S_{t-1}+f_{t-1}^*(-\alpha^n )- \epsilon S_{t-1}, ~ \forall n \geq 1\\
& \geq & \phi_{t-1}(\mathcal{S}^{(t-1)}) - \epsilon S_{t-1}.
\eean
As $\epsilon \to 0$, we get $\alpha S_{t-1}+f_{t-1}^*(-\alpha ) \geq  \phi_{t-1}(\mathcal{S}^{(t-1)})$. Therefore, we deduce that
$p_{t-1}(\mathcal{S}^{(t-1)}) \geq  \phi_{t-1}(\mathcal{S}^{(t-1)}).$ Hence,  the equality (\ref{eq*}) holds, which proves that the infimum super-hedging price  $p_{t-1}((S_u)_{u\le t-1})$ is  measurable with respect to the argument $(S_u)_{u\le t-1}$.
\end{proof}

The rest of this section aims to prove that, under some technical conditions,  the mapping $(S_u)_{u \leq t-1} \longmapsto p_{t-1}((S_u)_{u\le t-1})$ is lower-semicontinuous, which is needed to propagate backwardly the numerical procedure of Theorem \ref{infPrice} in the multi-step model.\smallskip

\begin{defi}
We say that the mapping $$ I_t: (S_u)_{u \leq t-1} \longmapsto{\rm cl\,}(I_t((S_u)_{u\le t-1})|\cF_{t})$$
is lower-semicontinous if the following property holds: For all sequence of price processes $((S^n_u)_{u \leq t-1})_{n\ge 1}$ converging a.s. to a process   $(S_u)_{u \leq t-1}$, and for all $z\in {\rm cl\,}(I_t((S_u)_{u\le t-1})|\cF_{t})$, there exists a sequence $(z^n)_{n\ge 1}$  such that $\lim_nz^n=z$ and $z^n \in  {\rm cl\,}(I_t((S^n_u)_{u\le t-1})|\cF_{t})$ for all $n\ge 1$.

\end{defi}

\begin{ex}Suppose that $d=1$ and 
$${\rm cl\,}(I_t((S_u)_{u\le t-1})|\cF_{t})=[m_{t-1}S_{t-1},M_{t-1}S_{t-1}]$$where $m_{t-1}, M_{t-1}\in L^0(\R_+,\cF_{t})$ and $m_{t-1}\le M_{t-1}$. \smallskip

Consider $z\in{\rm cl\,}(I_t((S_u)_{u\le t-1})|\cF_{t})$, i.e. $z=\alpha_tm_{t-1}S_{t-1}+(1-\alpha_t)M_{t-1}S_{t-1}$ where $\alpha_t\in L^0([0,1],\cF_{t})$. Let us define $z^n=\alpha_tm_{t-1}S^n_{t-1}+(1-\alpha_t)M_{t-1}S^n_{t-1}$ for all $n\ge 1$. Then, $z^n\in {\rm cl\,}(I_t((S^n_u)_{u\le t-1})|\cF_{t})$ and 
$$|z^n-z|\le 2M_{t-1}|S^n_{t-1}-S_{t-1}|$$
hence $\lim_nz^n=z$.

\end{ex}

In the following, we define the closed convex random sets

$$E_{t-1}^{\epsilon}((S_u)_{u\le t-1},z)=\bar{B}(0,\epsilon)\cap \left({\rm cl\,}(I_t((S_u)_{u\le t-1})|\cF_{t})-z \right),$$
where  $\bar{B}(0,\epsilon)$ is the closed ball of center $z=0$ and radius $\epsilon>0$. We say that the mapping $z\mapsto E_{t-1}^{\epsilon}((S_u)_{u\le t-1},z)$ is convex if, for all $\alpha\in [0,1]$, and $z_1,z_2\in \R^d$, we have 
$$E_{t-1}^{\epsilon}((S_u)_{u\le t-1},\alpha z_1+(1-\alpha)z_2)\subseteq \alpha E_{t-1}^{\epsilon}((S_u)_{u\le t-1},z_1)+(1-\alpha)E_{t-1}^{\epsilon}((S_u)_{u\le t-1},z_2).$$

Note that this convexity property above is automatically satisfied if $d=1$.

\begin{prop} Consider a payoff function $g_t$ defined on $(\R^d)^{t+1}$ such that, there exists $\alpha_{t-1}\in L^0((\R^d)^{t+1},\cF_t)$ such that  $g_t(x)-g_t(y)\ge \alpha_{t-1}(x-y)$, $x,y\in (\R^d)^{t+1}$. Suppose that  $ I_t: (S_u)_{u \leq t-1} \longmapsto {\rm cl\,}(I_t((S_u)_{u\le t-1})|\cF_{t})$ is lower-semicontinous and that $z\mapsto E_{t-1}^{\epsilon}((S_u)_{u\le t-1},z) $ is convex  for all $(S_u)_{u\leq t-1}$. Then,  $(S_u)_{u \leq t-1} \longmapsto p_{t-1}((S_u)_{u\le t-1})$ is lower-semicontinuous, i.e. $p_{t-1}((S_u)_{u\le t-1})\le \liminf_n p_{t-1}((S^n_u)_{u\le t-1})$ if $((S^n_u)_{u \leq t-1})_{n\ge 1}$ converges a.s. to   $(S_u)_{u \leq t-1}$.
\end{prop}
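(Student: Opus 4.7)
By Lemma~\ref{conc} together with (\ref{PriceInf}), $p_{t-1}((S_u)_{u\le t-1})$ equals the infimum of $\alpha S_{t-1}+\beta$ over all affine maps $\alpha x+\beta$ dominating $g_t(S_0,\cdots,S_{t-1},\cdot)$ on $I_t^*:={\rm cl\,}(I_t((S_u)_{u\le t-1})|\cF_t)$. Given $(S_u^n)_{u\le t-1}\to (S_u)_{u\le t-1}$ a.s., the plan is to select near-optimal affine dominants $(\alpha^n,\beta^n)$ for the prices $p_n:=p_{t-1}((S_u^n)_{u\le t-1})$, establish uniform a.s.\ boundedness of the slopes $(\alpha^n)$, and pass to a subsequential limit $(\alpha,\beta)$ that remains an affine dominant on the limiting set $I_t^*$. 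Writing $I_t^n:={\rm cl\,}(I_t((S_u^n)_{u\le t-1})|\cF_t)$ and extracting a subsequence we may assume $p_n\to L:=\liminf_n p_n<\infty$ (else the conclusion is trivial), and choose $(\alpha^n,\beta^n)\in L^0(\R^d\times\R,\cF_t)$ with
\[\alpha^n x+\beta^n\ge g_t(S_0^n,\cdots,S_{t-1}^n,x),\ \forall x\in I_t^n,\qquad \alpha^n S_{t-1}^n+\beta^n\le p_n+1/n.\]

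\textbf{Main step: boundedness of $(\alpha^n)$.} The subgradient-type hypothesis on $g_t$ yields a global affine minorant of the form $g_t(S_0^n,\cdots,S_{t-1}^n,x)\ge g_t(S_0^n,\cdots,S_{t-1}^n,S_{t-1}^n)+\alpha'_{t-1}(x-S_{t-1}^n)$ for some $\alpha'_{t-1}\in L^0(\R^d,\cF_t)$ read from $\alpha_{t-1}$. Subtracting this minorant from the dominance inequality gives, for every $x\in I_t^n$,
\[(\alpha^n-\alpha'_{t-1})(x-S_{t-1}^n)\ge g_t(S_0^n,\cdots,S_{t-1}^n,S_{t-1}^n)-p_n-1/n,\]
whose right-hand side has finite $\liminf$ by the joint l.s.c.\ of $g_t$ and $p_n\to L$. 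The l.s.c.\ of the mapping $I_t$ transports every small perturbation $he\in I_t^*-S_{t-1}$ into a perturbation $h_n e_n\in I_t^n-S_{t-1}^n$ with $(h_n,e_n)\to(h,e)$; and the assumed convexity of $z\mapsto E_{t-1}^{\epsilon}((S_u^n)_{u\le t-1},z)$ allows us to combine finitely many such perturbations into a convex set of admissible directions at $S_{t-1}^n$ which, uniformly in large $n$, contains a non-degenerate ball-like neighbourhood of $0$ spanning $\R^d$. Testing the displayed inequality against these perturbations bounds $(\alpha^n-\alpha'_{t-1})\cdot e$ from above for each unit direction $e$ and produces the uniform bound $|\alpha^n|\le M$ a.s.

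\textbf{Passage to the limit and conclusion.} Extracting a further subsequence, $\alpha^n\to\alpha$ a.s.; the near-optimality then forces $\beta^n=(\alpha^nS_{t-1}^n+\beta^n)-\alpha^nS_{t-1}^n\to\beta$ a.s.\ with $\alpha S_{t-1}+\beta\le L$. For every $x\in I_t^*$, the l.s.c.\ of $I_t$ supplies $x^n\in I_t^n$ with $x^n\to x$; passing to the $\liminf$ in $\alpha^n x^n+\beta^n\ge g_t(S_0^n,\cdots,S_{t-1}^n,x^n)$ and invoking the joint l.s.c.\ of $g_t$ yields $\alpha x+\beta\ge g_t(S_0,\cdots,S_{t-1},x)$. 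The $\inf$-formula (\ref{PriceInf}) then delivers $p_{t-1}((S_u)_{u\le t-1})\le \alpha S_{t-1}+\beta\le L=\liminf_n p_n$, as required. The delicate point is the boundedness step: one must convert the \emph{qualitative} lower-semicontinuity of $I_t$ and convexity of $E_{t-1}^{\epsilon}$ into a \emph{quantitative} uniform non-degeneracy of the admissible perturbations at $S_{t-1}^n$ inside $I_t^n$, which is what prevents $|\alpha^n|$ from blowing up; everything else is a standard diagonal extraction and limit in the affine dominance inequality.
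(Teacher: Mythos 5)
Your strategy---choose near-optimal affine dominants $(\alpha^n,\beta^n)$ for the perturbed problems, prove the slopes are uniformly bounded, extract a subsequential limit and check that it still dominates on the limiting support---differs from the paper's, and it hinges entirely on the boundedness step, which you flag as "the delicate point" but do not actually prove. The assertion that lower-semicontinuity of $I_t$ together with convexity of $z\mapsto E_{t-1}^{\epsilon}((S_u)_{u\le t-1},z)$ yields, uniformly in large $n$, a set of admissible perturbation directions at $S_{t-1}^n$ inside ${\rm cl\,}(I_t((S^n_u)_{u\le t-1})|\cF_{t})$ containing a non-degenerate ball around the origin is not justified and is false in general. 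Under AIP one only knows $S_{t-1}\in \overline{{\rm conv\,}}{\rm cl\,}(I_t((S_u)_{u\le t-1})|\cF_{t})$; this point may sit on the boundary of that convex set, and the set itself may be lower-dimensional or even a singleton. In those cases the infimum in (\ref{PriceInf}) is typically not attained and the slopes of near-optimal affine dominants genuinely blow up, so no a.s. convergent subsequence $(\alpha^n,\beta^n)$ exists. The lower-semicontinuity of $I_t$ only says that points of the limiting support can be approximated by points of the approximating supports; it carries no quantitative interior or non-degeneracy information, so the gap cannot be closed from the stated hypotheses. (A secondary point: even where boundedness holds, extracting an a.s. convergent subsequence from random variables requires a random-subsequence argument, not a deterministic extraction.)

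The paper's proof is built precisely to avoid this compactness issue. It never passes to limits of dominants: from an arbitrary affine dominant $h^{(n)}$ of the $n$-th problem it constructs
$\bar{h}^{(n)}(z)=\inf_{\tilde z\in \bar{B}(z,\epsilon)\cap\, {\rm cl\,}(I_t((S^n_u)_{u\le t-1})|\cF_{t})}h^{(n)}(\tilde z)+|\alpha_{t-1}|\sup_{u\le t-1}|S_u^n-S_u|+|\alpha_{t-1}|\epsilon$,
which dominates $g_t((S_u)_{u\le t-1},\cdot)$ on ${\rm cl\,}(I_t((S_u)_{u\le t-1})|\cF_{t})$ by the subgradient hypothesis on $g_t$ combined with the lower-semicontinuity of $I_t$, and which is concave exactly because of the assumed convexity of $z\mapsto E_{t-1}^{\epsilon}$ (the only place that hypothesis enters). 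This gives $p_{t-1}((S_u)_{u\le t-1})\le \bar{h}^{(n)}(S_{t-1})\le h^{(n)}(S^n_{t-1})+|\alpha_{t-1}|\sup_{u\le t-1}|S_u^n-S_u|+|\alpha_{t-1}|\epsilon$ for $n$ large; taking the infimum over $h^{(n)}$, then $n\to\infty$ and $\epsilon\to 0$, yields the claim with no boundedness of slopes required. To salvage your route you would need an extra standing assumption placing $S_{t-1}$ uniformly in the interior of the convex hulls of the supports, which is strictly stronger than AIP; as written, the proposal has a genuine gap at its central step.
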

\begin{proof} Suppose that $((S^n_u)_{u \leq t-1})_{n\ge 1}$ converges a.s. to   $(S_u)_{u \leq t-1}$. By assumption, we know that for all $z\in {\rm cl\,}(I_t((S_u)_{u\le t-1})|\cF_{t})$, there exists a sequence $z^n\in  {\rm cl\,}(I_t((S^n_u)_{u\le t-1})|\cF_{t})$ such that $\lim_n z_n=z$. We may suppose that $|z-z_n|\le \epsilon$ where $\epsilon>0$ is arbitrarily fixed. By assumption, for all $\tilde z \in {\rm cl\,}(I_t((S^n_u)_{u\le t-1})|\cF_{t})$ in the ball $\bar{B}(z,\epsilon)$ of center $z$ and radius $\epsilon$, we have:
\bea \nonumber
g_t((S_u)_{u\le t-1},z)&\le& g_t((S^n_u)_{u\le t-1},\tilde z) +|\alpha_{t-1}|\times|((S_u)_{u\le t-1},z)-((S^n_u)_{u\le t-1},\tilde z)|,\\ \nonumber
g_t((S_u)_{u\le t-1},z)&\le& g_t((S^n_u)_{u\le t-1},\tilde z) +|\alpha_{t-1}|\sup_{u\le t-1}|S_u^n-S_u|+|\alpha_{t-1}|\epsilon,\\
g_t((S_u)_{u\le t-1},z)&\le&h^{(n)}(\tilde z) +|\alpha_{t-1}|\sup_{u\le t-1}|S_u^n-S_u|+|\alpha_{t-1}|\epsilon, \label{ineqSC}
\eea
where $h^{(n)}$ is an arbitrary affine function satisfying $h^{(n)}\ge g_t((S^n_u)_{u\le t-1},\cdot)$ on ${\rm cl\,}(I_t((S^n_u)_{u\le t-1})|\cF_{t})$. Let us define 
$$\bar{h}^{(n)}(z)=\inf_{\tilde z\in \bar{B}(z,\epsilon)\cap  {\rm cl\,}(I_t((S^n_u)_{u\le t-1})|\cF_{t-1})}h^{(n)}(\tilde z) +|\alpha_{t-1}|\sup_{u\le t-1}|S_u^n-S_u|+|\alpha_{t-1}|\epsilon.$$
By convention, we set $\inf \emptyset=-\infty$. Let us show that $\bar{h}^{(n)}$ is concave. To see it, observe that $\tilde z\in \bar{B}(z,\epsilon)\cap  {\rm cl\,}(I_t((S^n_u)_{u\le t-1})|\cF_{t})$ if and only if $\tilde z= z+u$ where $u\in E^n(z)=\bar{B}(0,\epsilon)\cap \left({\rm cl\,}(I_t((S^n_u)_{u\le t-1})|\cF_{t})-z \right).$ Therefore, 
$$\bar{h}^{(n)}(z)=\inf_{u\in E^n(z)}h^{(n)}( z+u) +|\alpha_{t-1}|\sup_{u\le t-1}|S_u^n-S_u|+|\alpha_{t-1}|\epsilon.$$

 Let $z=\lambda z_1+(1-\lambda)z_2$. We only need to consider the case where $E^n(z_1)\ne \emptyset$ and $E^n(z_2)\ne \emptyset$. We deduce that $E^n(z)\ne \emptyset$. Moreover, by assumption, any $u\in E^n(z)$ may be written as $u=\alpha u_1+(1-\alpha) u_2$ where $u_i\in E^n(z_i)$, $i=1,2$. Therefore, 
\bean h^{(n)}(z+u)&=&\alpha  h^{(n)}(z_1+u_1)+(1-\alpha)  h^{(n)}(z_2+u_2),\\
&\ge&\alpha \bar{h}^{(n)}(z_1)+(1-\alpha)\bar{h}^{(n)}(z_2).
\eean
Taking the infimum in the left hand side of the inequality above, we deduce that $\bar{h}^{(n)}(\lambda z_1+(1-\lambda)z_2)\ge \alpha \bar{h}^{(n)}(z_1)+(1-\alpha)\bar{h}^{(n)}(z_2)$, i.e. $\bar{h}^{(n)}$ is concave.

By (\ref{ineqSC}), we deduce that
$p_{t-1}((S_u)_{u\le t-1})\le \bar{h}^{(n)}(S_t)$ for all $h^{(n)}$. As $S^n_{t-1}\in E^n(S_{t-1})$, for $n$ large enough, under AIP, we deduce that 
$$p_{t-1}((S_u)_{u\le t-1})\le h^{(n)}(S^n_{t-1}) +|\alpha_{t-1}|\sup_{u\le t-1}|S_u^n-S_u|+|\alpha_{t-1}|\epsilon.$$
Taking the infimum over all affine functions $h^{(n)}$, we get that for $n$ large enough:
$$p_{t-1}((S_u)_{u\le t-1})\le p_{t-1}((S^n_u)_{u\le t-1}) +|\alpha_{t-1}|\sup_{u\le t-1}|S_u^n-S_u|+|\alpha_{t-1}|\epsilon.$$
As $\epsilon$ is arbitrarily chosen, we may conclude that 
$$p_{t-1}((S_u)_{u\le t-1})\le \liminf_n p_{t-1}((S^n_u)_{u\le t-1}).$$ 
\end{proof}

\subsection{Case where $x\mapsto g_t(S_0,\cdots,S_{t-1},x)$ is a convex function}

We shall prove that $p_{t-1}((S_u)_{u\le t-1})$ is a convex function of the price process $(S_u)_{u\le t-1}$ if so $ \Lambda_{t-1}$  is. In the following, we say that the mapping $$ \Lambda_{t-1}: (S_u)_{u \leq t-1} \longmapsto \Lambda_{t-1}((S_u)_{u \leq t-1}):=\overline{{\rm conv}}\left({\rm cl\,}(I_t((S_u)_{u\le t-1})|\cF_{t})\right)$$ is convex for the inclusion if, for $\lambda \in [0,1]$, 
$$\Lambda_{t-1}(( \lambda((S_u)_{u \leq t-1})+(1-\lambda)((\tilde{S}_u)_{u \leq t-1})  \subseteq \lambda \Lambda_{t-1}((S_u)_{u \leq t-1})+(1-\lambda) \Lambda_{t-1}((\tilde{S}_u)_{u \leq t-1}),$$ for all price process $(S_u)_{u \leq t-1},(\tilde{S}_u)_{u \leq t-1} $.
\begin{prop}\label{propconvex} Suppose that the mapping \[ (\omega,z) \mapsto g_t(S_0,S_1(\omega),...,S_{t-1}(\omega), z)
 \text{ is } \mathcal{F}_{t} \otimes \mathcal{B}(\R^d) \text{ measurable, }\]
non negative and
\[ z \mapsto g_t(S_0,S_1,...,S_{t-1}, z) \text{ is lower semi-continuous and convex almost surely } \]
and suppose that the mapping $\Lambda_{t-1}: (S_u)_{u \leq t-1} \longmapsto \Lambda_{t-1}((S_u)_{u \leq t-1})$  is convex. 
Then, the mapping $(S_u)_{u \leq t-1}\mapsto p_{t-1}((S_u)_{u \leq t-1})$ is convex .
\end{prop}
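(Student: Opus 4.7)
The plan is to combine the dual (Choquet-type) representation of the infimum price with the joint convexity of the payoff that is implicit in the Asian-option setting. By Lemma \ref{conc} and the Fenchel duality already exploited in the proof of Theorem \ref{theo-OS}, since $z\mapsto g_t(S_0,\ldots,S_{t-1},z)$ is l.s.c.\ and convex and $\Lambda_{t-1}((S_u)_{u\le t-1})=\overline{\rm conv}\,{\rm cl\,}(I_t|\cF_t)$ is a closed convex set, the infimum super-hedging price admits the representation
\[
p_{t-1}((S_u)_{u\le t-1})=\sup\Bigl\{\int g_t(S_0,\ldots,S_{t-1},z)\,d\nu(z)\ :\ \supp\nu\subseteq\Lambda_{t-1}((S_u)_{u\le t-1}),\ \int z\,d\nu(z)=S_{t-1}\Bigr\},
\]
which is nothing but the concave envelope of $g_t(S_0,\ldots,S_{t-1},\cdot)$ on $\Lambda_{t-1}((S_u)_{u\le t-1})$ evaluated at $S_{t-1}$.

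Fix $\lambda\in[0,1]$ and two price processes $(S_u)_{u\le t-1}$ and $(\tilde S_u)_{u\le t-1}$, and set $(S^\lambda_u)_{u\le t-1}:=\lambda(S_u)_{u\le t-1}+(1-\lambda)(\tilde S_u)_{u\le t-1}$. Given any feasible $\nu^\lambda$ for $(S^\lambda_u)_{u\le t-1}$ and $Z\sim\nu^\lambda$, the inclusion $\Lambda_{t-1}(S^\lambda)\subseteq\lambda\Lambda_{t-1}(S)+(1-\lambda)\Lambda_{t-1}(\tilde S)$ guarantees that the $\cF_t$-measurable correspondence
\[
z\longmapsto D(z):=\bigl\{(y,\tilde y)\in\Lambda_{t-1}(S)\times\Lambda_{t-1}(\tilde S):\ z=\lambda y+(1-\lambda)\tilde y\bigr\}
\]
has non-empty closed convex values. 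A measurable selection from $D$ yields random variables $(Y,\tilde Y)$ with $Z=\lambda Y+(1-\lambda)\tilde Y$, $Y\in\Lambda_{t-1}(S)$ and $\tilde Y\in\Lambda_{t-1}(\tilde S)$ a.s. I would then use the pointwise freedom in this selection---anchored by the distinguished decomposition $(S_{t-1},\tilde S_{t-1})$ of $S^\lambda_{t-1}$, which is admissible under AIP by Corollary \ref{coroAIP}---to arrange the marginal means to satisfy $\E Y=S_{t-1}$ (and then automatically $\E\tilde Y=\tilde S_{t-1}$, since $\lambda\E Y+(1-\lambda)\E\tilde Y=\E Z=S^\lambda_{t-1}=\lambda S_{t-1}+(1-\lambda)\tilde S_{t-1}$ leaves exactly one free degree of freedom in the marginal means).

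Joint convexity of $g_t$ on $(\R^d)^{t+1}$---built into the Asian-option framework of the paper---then gives pointwise
\[
g_t(S^\lambda,Z)\le\lambda\,g_t(S,Y)+(1-\lambda)\,g_t(\tilde S,\tilde Y)\quad\text{a.s.,}
\]
so that after integration $\int g_t(S^\lambda,z)\,d\nu^\lambda\le\lambda\,\E[g_t(S,Y)]+(1-\lambda)\,\E[g_t(\tilde S,\tilde Y)]$. Since the marginal laws of $Y$ and $\tilde Y$ are feasible measures for the dual problems attached to $S$ and $\tilde S$ respectively, each expectation is bounded above by $p_{t-1}(S)$ and $p_{t-1}(\tilde S)$. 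Taking the supremum over $\nu^\lambda$ yields $p_{t-1}(S^\lambda)\le\lambda\,p_{t-1}(S)+(1-\lambda)\,p_{t-1}(\tilde S)$, which is the desired convexity.

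The hard part will be the measurable-selection step producing a decomposition of $Z$ with prescribed marginal means: this combines the convexity of the set-valued map $\Lambda_{t-1}$ (which makes the correspondence $D$ non-empty and convex-valued) with AIP (which places the anchor $(S_{t-1},\tilde S_{t-1})$ inside $\Lambda_{t-1}(S)\times\Lambda_{t-1}(\tilde S)$ and so permits small perturbations around it). A purely primal route based on interpolating near-optimal affine super-replicators $(\alpha,\beta)$ and $(\tilde\alpha,\tilde\beta)$ given by Lemma \ref{conc} cannot proceed by naive linear interpolation because the objective $(\alpha,S_{t-1})\mapsto\alpha S_{t-1}+\beta$ is bilinear in the pair $(\alpha,S_{t-1})$, producing a residual cross-term $\lambda(1-\lambda)(\alpha-\tilde\alpha)(\tilde S_{t-1}-S_{t-1})$ of indefinite sign; correcting for it requires the same decomposition-with-controlled-means argument sketched above.
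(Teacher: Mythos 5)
Your route is genuinely different from the paper's: you work on the dual side, representing $p_{t-1}$ as a supremum of $\int g_t\,d\nu$ over measures supported on $\Lambda_{t-1}$ with barycenter $S_{t-1}$, and then try to split each feasible $\nu^\lambda$ for the interpolated process into feasible measures for the two endpoint processes. But the pivotal step fails. You need a measurable selection $(Y,\tilde Y)$ of the decomposition correspondence $D(Z)$ with the \emph{prescribed} marginal mean $\E Y=S_{t-1}$, and no amount of ``pointwise freedom'' guarantees this: at extreme points of $\Lambda_{t-1}(S^\lambda)$ the decomposition is forced. Concretely, take $d=1$, $\lambda=\frac12$, $\Lambda_{t-1}(S)=\Lambda_{t-1}(\tilde S)=\Lambda_{t-1}(S^\lambda)=[0,1]$, $S_{t-1}=0$, $\tilde S_{t-1}=1$ (so AIP and the convexity of $\Lambda_{t-1}$ hold), and $\nu^\lambda=\frac12\delta_0+\frac12\delta_1$, which is feasible for $S^\lambda_{t-1}=\frac12$. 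Then $Z=0$ forces $Y=\tilde Y=0$ and $Z=1$ forces $Y=\tilde Y=1$, so \emph{every} selection has $\E Y=\frac12\neq 0=S_{t-1}$. Once $\E Y\neq S_{t-1}$, the law of $Y$ is not feasible for the dual problem attached to $S$, and the inequality $\E[g_t(S,Y)]\le p_{t-1}(S)$ is no longer available (Jensen only bounds $\E[g_t(S,Y)]$ by the concave envelope evaluated at $\E Y$, which need not be dominated by its value at $S_{t-1}$). So the argument as sketched does not close, and the closing paragraph of your proposal, which defers precisely this step, is deferring the actual difficulty.

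The paper's proof avoids prescribing means altogether. It interpolates affine super-replicators $\overline h,\tilde h$ of $g_t(\overline S,\cdot)$ and $g_t(\tilde S,\cdot)$ and, for each $x$, takes the \emph{infimum} of $\lambda\overline h(\overline x)+(1-\lambda)\tilde h(\tilde x)$ over all admissible decompositions $x=\lambda\overline x+(1-\lambda)\tilde x$ (an infimal-convolution construction over the sets $E_x$). Using the convexity of the set-valued map $\Lambda_{t-1}$, the resulting function $\hat h$ is concave with finite values, hence continuous and supported by affine functions, and it dominates $\hat g_t$ on $\Lambda_{t-1}(S^\lambda)$ by the joint convexity of $g_t$; this yields $p_{t-1}(S^\lambda)\le\hat h(S^\lambda_{t-1})$. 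Only at this single evaluation point does one insert the distinguished decomposition $(\overline S_{t-1},\tilde S_{t-1})$, admissible under AIP, to get $\hat h(S^\lambda_{t-1})\le\lambda\overline h(\overline S_{t-1})+(1-\lambda)\tilde h(\tilde S_{t-1})$, and then one takes the infimum over $\overline h,\tilde h$. In short, the bilinear cross-term you correctly identify is absorbed by the concavity of the infimal convolution, not by steering marginal means; your proposal needs to be reworked along those lines to be correct.
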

\begin{proof}
Let $\tilde{(S_u)}_{u\le t-1},(\overline{S_u})_{u\le t-1}$ be two price processes. Let us define the following price process $(S_u)_{u\le t-1}=\lambda (\overline{S_u})_{u\le t-1}+(1-\lambda) \tilde{(S_u)}_{u\le t-1}$ for $\lambda \in [0,1]$.
We consider the following random sets: 

\bean \Lambda_{t-1}&=&  \overline{{\rm conv}}\left({\rm cl\,}(I_t((S_u)_{u\le t-1})|\cF_{t})\right),\,t\ge 1,\\
\tilde{\Lambda}_{t-1}&= & \overline{{\rm conv}}\left({\rm cl\,}(I_t(\tilde{(S_u)}_{u\le t-1})|\cF_{t})\right),\,t\ge 1, \\
  \overline{\Lambda}_{t-1}&=&  \overline{{\rm conv}}\left({\rm cl\,}(I_t((\overline{S_u})_{u\le t-1})|\cF_{t})\right),\,t\ge 1.
\eean
 
  By assumption, we have $ \Lambda_{t-1} \subseteq \lambda \overline{\Lambda}_{t-1} +(1- \lambda) \tilde{\Lambda}_{t-1}$ for $\lambda \in [0,1]$.
 Let $\overline{h}$ and $\tilde{h}$ be two affine functions such that:
\[ \overline{h}(\overline{x}) \geq g_t((\overline{S_u})_{u \leq t-1},\overline{x}), \  \forall \overline{x} \in \overline{\Lambda}_{t-1}. \]
\[ \tilde{h}(\tilde{x}) \geq g_t((\tilde{S}_u)_{u \leq t-1},\tilde{x}), \  \forall \tilde{x} \in \tilde{\Lambda}_{t-1}. \]
Thus, for $\lambda \in ]0,1[$, we have
\bean
 \lambda \overline{h}(\overline{x})+ (1-\lambda) \tilde{h}(\tilde{x}) &\geq& \lambda g_t((\overline{S_u})_{u \leq t-1},\overline{x})+ (1-\lambda)g_t((\tilde{S}_u)_{u \leq t-1},\tilde{x})\\
 & \geq & g_t(\lambda((\overline{S_u})_{u \leq t-1} )+ (1- \lambda)((\tilde{S}_u)_{u\leq t-1} ) , \lambda \overline{x} + (1-\lambda) \tilde{x}). 
\eean
Let $x  \in \Lambda_{t-1}$ such that $x=\lambda \overline{x}+ (1-\lambda) \tilde{x}$.  By above, we have:
\bean
 \lambda \overline{h}(\overline{x})+ (1-\lambda) \tilde{h}(\tilde{x}) & \geq & g_t((S_u)_{u\leq t-1} ), x)=: \hat{g}_t(x).
\eean
Now, let us consider  $$E_x= \left\{ \frac{\lambda - 1 }{\lambda} \tilde{\Lambda}_{t-1}+ \frac{1}{\lambda}x, \lambda \in ]0,1[  \right\} \cap \overline{\Lambda}_{t-1}.$$

Observe that $\alpha E_{x_1}+(1-\alpha)E_{x_2}=E_{\alpha x_1+(1-\alpha)x_2}$ for all $\alpha\in [0,1]$, and $x_1,x_2\in \R^d$. Then, with $x=\alpha x_1+(1-\alpha)x_1$, any $\overline{x} \in E_x$ may be written as $\overline{x}=\alpha \overline{x}_1+(1-\alpha)\overline{x}_2$, where $\overline{x}_i \in E_{x_i}$, $i=1,2$. As $(x,\overline{x})\mapsto \tilde{h}(\frac{1}{1- \lambda}(x-\lambda \overline{x})) $ is affine, we deduce that 
\bean\lambda \overline{h}(\overline{x})+(1-\lambda)\tilde{h}(\frac{1}{1- \lambda}(x-\lambda \overline{x}))&\ge& \alpha \left(\lambda \overline{h}(\overline{x}_1)+(1-\lambda)\tilde{h}(\frac{1}{1- \lambda}(x_1-\lambda \overline{x}_1))\right)\\&&+(1-\alpha) \left(\lambda \overline{h}(\overline{x}_2)+(1-\lambda)\tilde{h}(\frac{1}{1- \lambda}(x_2-\lambda \overline{x}_2) )\right),\\
\lambda \overline{h}(\overline{x})+(1-\lambda)\tilde{h}(\frac{1}{1- \lambda}(x-\lambda \overline{x}))&\ge& \alpha \hat{h}(x_1)+(1-\alpha)\hat{h}(x_2),
\eean
where $\hat{h}(x)=\underset{\overline{x} \in E_x}\inf \lbrace \lambda \overline{h}(\overline{x})+(1-\lambda)\tilde{h}(\frac{1}{1- \lambda}(x-\lambda \overline{x})) \rbrace$. Therefore, taking the infimum in the right side of the inequality above, we deduce  that $\hat{h}$ is a (non negative) concave function with finite values. So, it is continuous and  we have $\hat{h}(x) \geq \hat{g}_t(x)$ for all $x \in \Lambda_{t-1}$. We deduce that
\bean
p_{t-1}((S_u)_{u \leq t-1}) & \leq & \hat{h}(S_{t-1})\\
& \leq & \lambda \overline{h}(\overline{S}_{t-1}) +(1-\lambda)\tilde{h}(\tilde{S}_{t-1}),\forall  \overline{S}_{t-1} \in \overline{\Lambda}_{t-1} ,\, \tilde{S}_{t-1} \in \tilde{\Lambda}_{t-1}.
\eean
Taking the infimum over all the affine functions $\overline{h}$ and $\tilde{h}$, we deduce that
\bean
p_{t-1}((S_u)_{u \leq t-1}) & \leq & \lambda p_{t-1}((\overline{S}_u)_{u \leq t-1}) +(1-\lambda)p_{t-1}((\tilde{S}_u)_{u \leq t-1}) 
\eean
and the conclusion follows.
\end{proof}

\begin{rem}\label{rem-continuous} Suppose that the AIP condition holds and that (\ref{hypothese}) holds. Consider $ \phi_{t-1}(u)= \underset{n }\inf\left\{\alpha^n u_{t-1}+f_{t-1}^*(-\alpha^n,u) \right\}$,  $u=(u_0,...,u_{t-1}) \in  (\R^d)^{t}$, where
$f_{t-1}^*(-\alpha,u) = \underset{m}\sup~ \left[ g_t(u,x^m(u))-\alpha x^m(u) \right]$. Recall that, by Proposition \ref{MeasPrice}, $p_{t-1}((S_u)_{u\le t-1})=\phi_{t-1}((S_u)_{u\le t-1}).$ When $g_t$ is convex, then  $\phi_{t-1}$ is convex by Proposition \ref{propconvex}. Moreover,  if $g_t\ge 0$,   $0 \leq \phi_{t-1}< \infty$ by  Proposition \ref{Bound+infprice=price}. Then, ${\rm dom}\, \phi_{t-1} = (\R^d)^t $ and we deduce that   $\phi_{t-1}$ is continuous on $(\R^d)^t $.
\end{rem}

\begin{rem}  Consider the case $d=1$. By a measurable selection argument, we may show that there exists $m_{t-1},M_{t-1}\in L^0([0,\infty],\cF_{t})$ such that 
$$\overline{{\rm conv}}\left({\rm cl\,}(I_t((S_u)_{u\le t-1})|\cF_{t})\right)=[m_{t-1},M_{t-1}].$$

By Lemma \ref{conc}, we deduce that  under (AIP)

\bea \label{Convex-price}p_{t-1}((S_u)_{u\le t-1})&=&g_t(S_0,\cdots,S_{t-1},m_{t-1})\\ \nonumber +&&\frac{g_t(S_0,\cdots,S_{t-1},M_{t-1})-g_t(S_0,\cdots,S_{t-1},m_{t-1})}{M_{t-1}-m_{t-1}}(S_{t-1}-m_{t-1}).
\eea

Moreover, the strategy is given by

$$\theta_{t-1}=\frac{g_t(S_0,\cdots,S_{t-1},M_{t-1})-g_t(S_0,\cdots,S_{t-1},m_{t-1})}{M_{t-1}-m_{t-1}}.$$

If we suppose that $m_{t-1}=k_{t-1}^dS_{t-1}$ and $M_{t-1}=k_{t-1}^uS_{t-1}$ as in \cite{BCL}, where $k_{t-1}^d$ and $k_{t-1}^u$ are deterministic coefficients, then $ p_{t-1}((S_u)_{u\le t-1})=g_{t-1}((S_u)_{u\le t-1})$ with 
$$g_{t-1}(x_0,\cdots,x_{t-1})=\lambda_{t-1}g_{t}(x_0,\cdots,x_{t-1},k_{t-1}^dx_{t-1})+(1-\lambda_{t-1})g_{t}(x_0,\cdots,x_{t-1},k_{t-1}^ux_{t-1}),$$
where $\lambda_{t-1}= \frac{k_{t-1}^u -1}{k_{t-1}^u-k_{t-1}^d}$ and $g_T$ is the payoff function.

At last, the order to be sent at time $t$ is given by the deterministic mapping defined on $\R^t$ by

$$\theta_{t-1}(s_0,\cdots,s_{t-1})=\frac{g_t(s_0,\cdots,s_{t-1},k_{t-1}^us_{t-1})-g_t(s_0,\cdots,s_{t-1},k_{t-1}^ds_{t-1})}{(k_{t-1}^u-k_{t-1}^d)s_{t-1}}.$$
\end{rem}

\begin{rem}[Market impact]

It is possible in our model to include a market impact. Indeed, it suffices to make the order (demand) mapping $ D_t(x)=\theta_t(x)-\theta_{t-1}(S_{t-1})$ coincided at time $t$ with the supply mapping $O_t(x)$, i.e. the available quantity  we may buy or sell at price $x$ in the order book. By convention, $O_t$ is negative for bid prices and positive for ask prices. It is an increasing function on $\R_+$ starting from $O_t(0+)=-\infty$ at price $0$ (we can sell as many assets as we want to the market at price $0$) and ending up with $O_t(+\infty)=+\infty$, i.e. we can buy as many assets as we want to the market at price $+\infty$. As soon as $D_t$ is bounded, there exists  executable bid prices $S^b_t$  in the order book such that $D_t(S^b_t)\ge O_t(S^b_t)$ when $D_t(S^b_t)\le 0$, i.e. the order may be executed  at price $S^b_t$ as the quantity $|D_t(S^b_t)|\le |O_t(S^b_t)|$. The executed bid price is naturally the best one among all possible. Similarly, there exists executable ask prices $S^a_t$ in the order book such that $D_t(S^a_t)\le O_t(S^a_t)$ when $D_t(S^a_t)\ge 0$ and  the order may be executed at price $S^a_t$ for the quantity $D_t(S^a_t)\le O_t(S^a_t)$. Note that the executed bid price may be closed to $0$ while the executed ask price may be very large. This liquidity phenomenon is then taken into account in the model through the conditional supports allowing to compute the strategy in our approach. 
\end{rem}

\subsection{The multistep backward procedure}

The main results of Section \ref{MR} for the one step model may be applied recursively, starting from time $T$, as the payoff function $g_T$ is known. 

Consider the case where the conditional support ${\rm cl\,}(I_t((S_u)_{u \leq t-1})|\cF_{t})$ admits a  Castaing representation $(\xi^m)_{m\ge 1}$ where $\xi^m=x^m((S_u)_{u \leq t-1})$, for all $m\ge 1$, and $x^m$ are Borel functions on $(\R^d)^t$. Then, by Proposition \ref{MeasPrice}, we know that the infimum price at time $T-1$ is a Borel function $g_{T-1}$ of the prices $S_0,\cdots,S_{T-1}$. Then, we may  repeat the procedure if we are in position to verify that $g_{T-1}$ is also l.s.c. This is the case by Proposition \ref{propconvex} and Remark \ref{rem-continuous}, under convexity conditions.   \smallskip

Many questions could be  investigated for future research, e.g. sensitivity to modeling assumptions, but also how to calibrate such a model from statistical estimations. Mainly, we need  to estimate conditional supports. This is illustrated in the numerical example that we propose in the next section. A technical question is also to consider discontinuous payoff functions even if this is less usual in finance where $g$ is generally a convex function. Actually, by Lemma \ref{conc}, we may replace the payoff function by its concave envelope. Note that our analysis is general enough to consider a lot of models, e.g. with  order books.

\section{Numerical illustration} \label{NI}

\subsection{Formulation of the problem with $d=1$}

In this section we consider the example of the European call option at time $T=2$, i.e. with the payoff function  $g(S_2)=(S_2-K)^+$, $K>0$. Let $(S_t)_{t=0,1,2}$ be the executed price process. Recall that $S_t$  belongs to the random set $\Lambda_t$, for $t=0,1,2,$ respectively. We suppose that the risk-free asset is given by $S^0=1$. Recall that there exist  $\mathcal{F}_{t+1}$-measurable closed random sets $I_t=I_t((S_u)_{u \leq t-1})$ such that: $$ \overline{\Sigma}(\Lambda_t((S_u)_{u \leq t-1}))= L^0(I_t((S_u)_{u \leq t-1}),\mathcal{F}_{t+1}),\quad t=0,1,2.$$
We may suppose that $\Lambda = \overline{\Sigma}(\Lambda)$ so that  $ S_t \in I_t$ a.s.  for $t=0,1,2$.
 At each step, we shall apply the procedure we have developed in the sections above. In particular, we seek for the strategy $\theta$ and we deduce the portfolio value $V$ associated to the executed price process $S$. Then, we may  estimate the error between the terminal value of $V_2$ and the payoff $g_2(S_2)$ that we denote by $ \epsilon_2= V_2-g_2(S_2)$. 
 
 We start from a known price $S_{-1}$ at time $t=0$, which corresponds to the last traded price. We suppose that  $I_t=I_t((S_u)_{u\le t-1})=[S_{t-1} m_t,S_{t-1} M_t]$, $t=0,1,2$, where the two random variables $m_t$ and $M_t$ are independent of $S_{t-1}$ and are uniformly distributed as $m_t \sim \mathcal{U}[0.7,1]$ and $M_t=m_t+spr_t$ such that $ spr_t\sim \mathcal{U}[0,0.4] $ is independent of $m_t$. Observe that $m_t^-=0.7$  and $M_t^+=1.4$.

At time $t=0$, we  choose in our model  to pick randomly $S_0$ in the interval $I_0$. Precisely, $ S_0 = S_{-1}m_0+k_0 S_{-1}(M_0-m_0)$, where $k_0$ is a random variable such that $k_0 \sim \mathcal{U}[0,1]$.  We make this choice for simplicity and that corresponds to the case where  the bid and ask prices of the market coincide with the mid price $S_0$. The order we sent is of the form {\sl buy or sell the quantity $\theta_0(z)$ at the price $z$.}

At time $t=1$, we choose to model   bid and  ask prices $S_1^{bid}$, $S_1^{ask}$ respectively as: $S_1^{bid}=S_0 m_1$ and $S_1^{ask}=S_0 M_1$ where $S_0$ is the last executed price. Notice that  the order of buying or selling depends on  the bid-ask values, see Figure  \ref{figure}. We define $S_1^*$ such that $\Delta \theta_1(S_1^*)=0$. If  $S^{bid}_1 \leq S^{ask}_1 \leq S_1^*$ (in the green zone $\{S_1:~\Delta \theta_1(S_1) \le 0\}$), then $S_1=S_1^{bid}$ since $\Delta \theta_1 \le 0$. If $S_1^*\le S^{bid}_1 \leq S^{ask}_1$, (the yellow zone,) then $S_1=S_1^{ask}$ as $ \Delta \theta_1> 0$. Otherwise, if $S^{bid}< S_1^* < S_1^{ask}$, we may arbitrarily choose  $S_1=S_1^{ask}$ or $S_1=S_1^{bid}$. In our model, we make the (arbitrary) choice that, if $|S_1^*-S_1^{bid}| \leq |S_1^*-S_1^{ask}|$, then $S_1= S_1^{ask}$ and $S_1= S_1^{bid}$ otherwise.

\begin{figure}[h!]
    \centering
        \includegraphics[scale=0.6]{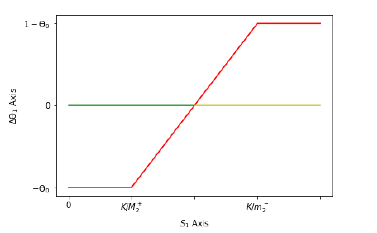} 
    \caption{}
    \label{figure}
\end{figure}

At last, we choose $S_2=S_2^{ask}=S_2^{bid} \in I_2=[m_2S_1, M_2S_1]$ accordingly to the formula $ S_2= S_1m_2+k_2 S_1 (M_2-m_2)$ where $k_2$ a uniform random variable in the interval $[0,1]$.

Note that the mapping $s_1\mapsto \Delta \theta_1(s_1)$ is the $\cF_1$-measurable order we send at time $t=1$, see Figure \ref{figure}. The later depends on $S_0$, which is $\cF_1$-measurable.

\subsection{Explicit computation of the strategy}

We deduce the portfolio value and the strategy value at any time by dominating the payoff function by the smallest affine function on the conditional support of $S$, as mentioned in (\ref{PriceInf}). We consider the terminal payoff function $g(S_T)=(S_T-K)^+$ for several strikes.

\subsubsection{The strategy at time $t=1$}
Recall that $S_2 \in \Lambda_2 (S_1) \sim I_2= [ S_1m_2, S_1M_2 ] $. In order to compute the strategy $\theta_1=\theta_1(S_1)$ we first compute the function $\varphi_1$ given by (\ref{PriceInf}) which dominates the the pay-off function $g_2$ on the conditional support  ${\rm cl\,}(I_2(S_1)|\cF_{2})=[S_1 m_2^-,S_1 M_2^+]$.

\subsubsection*{\textbf{1st case}: $K \in [S_1 m_2^-, S_1 M_2^+] \Leftrightarrow S_1 \in [\frac{K}{M_2^+}, \frac{K}{m_2^-}]$ .}

The dominating affine function $\varphi_1$, see Figure \ref{1s1}, is given by: $$ \varphi_1(x) = \frac{(S_1 M_2^+ - K)(x-S_1 m_2^-)}{S_1(M_2^+ - m_2^-)}.$$
So, $$V_1(S_1)=p_1(S_1)=\varphi_1(S_1)= \frac{(S_1 M_2^+ - K)(1- m_2^-)}{M_2^+ - m_2^-}=:g_1(S_1), $$ and 
 $$\theta_1(S_1)= \frac{S_1 M_2^+ - K}{S_1(M_2^+ - m_2^-)} .$$
 A simple computation shows that:
$$V_2= V_1(S_1)+ \theta_1(S_1)(S_2-S_1)=\varphi_1(S_2) \geq g_2(S_2).$$


\begin{figure}[h!]
\begin{minipage}[c]{.38\linewidth}
\includegraphics[scale=0.5]{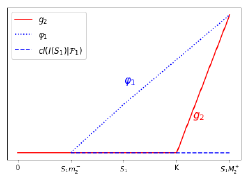}
\caption{}
\label{1s1}
 \end{minipage}
 \begin{minipage}[c]{.38\linewidth} 
\includegraphics[scale=0.42]{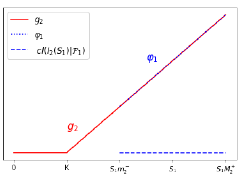} 
\caption{}
\label{2s1}
 \end{minipage} 
 \end{figure}

\subsubsection*{\textbf{2nd case}: $K \leq S_1  m_2^- \Leftrightarrow S_1 \geq  \frac{K}{m_2^-}$.}
In this case, we have $\varphi_1(x)=(x-K)^+$ for all $x \in [S_1 m_2^-, S_1 M_2^+]$, see Figure \ref{2s1}. 
Hence,  $V_1(S_1)= (S_1 -K)^+=:g_1(S_1)$ and $\theta_1(S_1)=1$. \\
\subsubsection*{\textbf{3rd case}: $K \geq S_1 M_2^+ \Leftrightarrow S_1 \leq \frac{K}{M_2^+}$.}
Observe that the dominating affine function $\varphi_1$ coincides with the x-axis on the support $[S_1 m_2^-, S_1 M_2^+]$, see Figure \ref{3s1}. Therefore, 
$V_1(S_1)=g_1(S_1):= 0 $ and we deduce that $\theta_1(S_1)=0$.

 \begin{figure}[h!]

\includegraphics[scale=0.6]{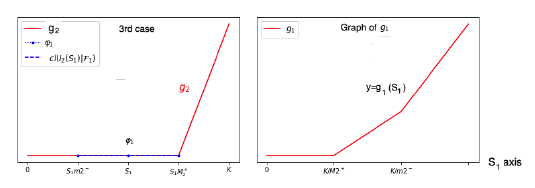}
\caption{}
\label{3s1}
 \end{figure}
\bigskip

We finally deduce that 
$$g_1(x)=\frac{(xM_2^+-K)(1-m_2^-)}{M_2^+-m_2^-}1_{\left[\frac{K}{M_2^+},\frac{K}{m_2^+}\right]}(x)+(x-K)^+1_{\left[\frac{K}{m_2^+},\infty\right)}(x).$$

The graph of the payoff function $g_1$  is represented in Figure \ref{3s1}.

\subsubsection{The strategy at time $t=0$}
In order to determine the strategy $\theta_0$, we compute the smallest affine  function  $\varphi_0$ that dominates $g_1$ on the conditional support $cl(I_1(S_0)| \mathcal{F}_0).$ 
\subsubsection*{\textbf{1st case}: $ S_0  M_1^+ \leq \frac{K}{M_2^+}$, i.e. $S_0\le \frac{K}{M_1^+M_2^+}$.}

We have $V_0(S_0)=g_0(S_0)= 0$ and $\theta_0(S_0)=0$, see Figure \ref{1s0}.

\begin{figure}[h!]
\begin{minipage}[c]{.38\linewidth}
\includegraphics[scale=0.5]{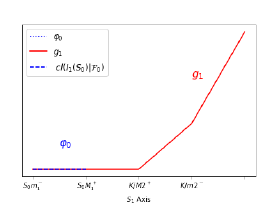}
\caption{}
\label{1s0}
 \end{minipage}
 \begin{minipage}[c]{.38\linewidth} 
\includegraphics[scale=0.5]{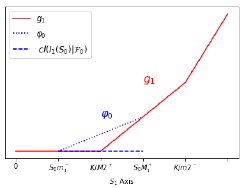} 
\caption{}
\label{2s00}
 \end{minipage} 
 \end{figure}

\subsubsection*{\textbf{2nd case}: $ S_0  m_1^- \leq \frac{K}{ M_2^+}$ and $ S_0  M_1^+ \in [\frac{K}{ M_2^+},\frac{K}{ m_2^-} ] $, i.e. $S_0\in [\frac{K}{ M_1^+M_2^+},\frac{K}{ m_1^- M_2^+}\wedge \frac{K}{ m_2^-M_1^+}]$.}


We find that (see Figure \ref{2s00}):
\beqa
\varphi_0(x)= \frac{(S_0M_1^+M_2^+ -K)(1-m_2^-)}{S_0(M_1^+ - m_1^-)(M_2^+ - m_2^-)}(x-S_0m_1^-).
\eeqa
So, 
$$ V_0(S_0)=\varphi_0(S_0)= \frac{(S_0 M_1^+M_2^+ -K)(1 -  m_2^-)(1- m_1^-)}{( M_2^+ -  m_2^-)( M_1^+ -  m_1^-)}=:g_0(S_0),$$ and $$ \theta_0(S_0)= \frac{(S_0  M_1^+  M_2^+ -K)(1-  m_2^-)}{S_0( M_2^+ -  m_2^-)( M_1^+ -  m_1^-)}. $$

\subsubsection*{\textbf{3rd case}: $ S_0  m_1^- \leq \frac{K}{ M_2^+}$ and $ S_0  M_1^+ \geq \frac{K}{ m_2^-} $, i.e. $S_0\in [ \frac{K}{ m_2^-M_1^+},\frac{K}{ m_1^-M_2^+}]$.}


We have, see Figure \ref{3s0}:
\beqa
\varphi_0(x)= \frac{S_0M_1^+ -K}{S_0(M_1^+ - m_1^-)}(x -S_0m_1^-).
\eeqa

\begin{figure}[h!]
\begin{minipage}[c]{.38\linewidth}
\includegraphics[scale=0.5]{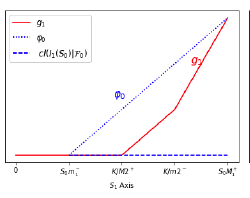}
\caption{}
\label{3s0}
 \end{minipage}
 \begin{minipage}[c]{.38\linewidth} 
\includegraphics[scale=0.5]{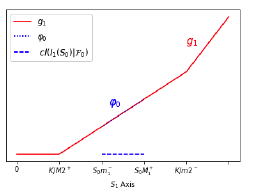} 
\caption{}
\label{4s0}
 \end{minipage} 
 \end{figure}

So,
\bean V_0(S_0)=\varphi_0(S_0)= \frac{(S_0 M_1^+ -K)(1- m_1^-)}{ M_1^+ - m_1^-}=:g_0(S_0),\quad \theta_0(S_0)= \frac{S_0  M_1^+ - K}{S_0(  M_1^+ -  m_1^-)}.\eean

\subsubsection*{\textbf{4th case}: $ S_0  m_1^- \in [ \frac{K}{ M_2^+},\frac{K}{ m_2^-} ] $ and $ S_0  M_1^+ \in [\frac{K}{ M_2^+},\frac{K}{ m_2^-} ] $, i.e. $S_0\in [\frac{K}{ m_1^-M_2^+},\frac{K}{ m_2^- M_1^+}]$.}


We have $\varphi_0(x)=g_1(x)$, for all $x \in cl(I_1(S_0)| \mathcal{F}_0)$, see Figure \ref{4s0}. Therefore, 

$$V_0(S_0)=\varphi_0(S_0)= \frac{(S_0 M_2^+ -K)(1 -  m_2^-)}{ M_2^+ - m_2^-}=:g_0(S_0),\quad \theta_0(S_0)= \frac{  M_2^+(1-m_2^-)}{ M_2^+  -  m_2^-}.$$

\subsubsection*{\textbf{5th case}: $ S_0  m_1^- \in [ \frac{K}{ M_2^+ },\frac{K}{ m_2^-} ] $ and $ S_0  M_1^+ \geq { \frac{K}{ m_2^-} }$, i.e. $S_0\in [\frac{K}{m_1^- M_2^+ }\vee \frac{K}{ m_2^-M_1^+},\frac{K}{ m_1^-m_2^-} ]$.}


We obtain that (see Figure \ref{5s0}):

\bean \varphi_0(x)&=&\frac{(S_0M_1^+ -K)(M_2^+-m_2^-)-(S_0m_1^-M_2^+ -K)(1-m_2^-)}{S_0(  M_1^+ -  m_1^-)(M_2^+ -  m_2^-)}x \\
&& + \frac{-m_1^-(S_0M_1^+ -K)(M_2^+-m_2^-)+M_1^+(S_0m_1^-M_2^+ -K)(1-m_2^-)}{(  M_1^+ -  m_1^-)(M_2^+ -  m_2^-)}.\eean

\begin{figure}[h!]
\begin{minipage}[c]{.38\linewidth}
\includegraphics[scale=0.5]{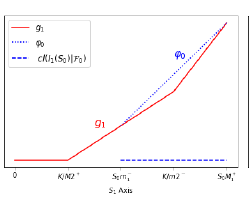}
\caption{}
\label{5s0}
 \end{minipage}
 \begin{minipage}[c]{.38\linewidth} 
\includegraphics[scale=0.5]{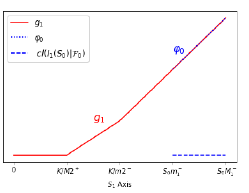} 
\caption{}
\label{6s0}
 \end{minipage} 
 \end{figure}

Then,
\bean V_0(S_0)&=&\varphi_0(S_0)=:g_0(S_0)\\
&&= \frac{(S_0 M_1^+ -K)(M_2^+  -  m_2^-)(1- m_1^-)-(S_0  m_1^- M_2^+ -K)(1- m_2^-)( 1- M_1^+ )}{(M_1^+ - m_1^-)(M_2^+ -  m_2^-)}\eean and $$~~~~~~~~ \theta_0(S_0)=\frac{(S_0 M_1^+ - K)(M_2^+ - m_2^-)- (S_0 m_1^- M_2^+ -K)(1-m_2^-)}{S_0 (M_2^+ - m_2^-) (M_1^+ - m_1^-)}. $$

\subsubsection*{\textbf{6th case}: $ S_0  m_1^- \geq { \frac{K}{ m_2^-} } $ and $ S_0  M_1^+ \geq { \frac{K}{ m_2^-} } $, i.e. $S_0\ge \frac{K}{ m_2^-m_1^-}$.}


We have $V_0(S_0)= (S_0-K)^+=:g_0(S_0)$ and $\theta_0(S_0)=1$, see Figure \ref{6s0}. \bigskip

\subsection{\textbf{Empirical results}}

For an observed price $S_{-1}$ at time $t=0$ (which corresponds to the last traded price), and for different strike values $K$, we test the infimum super-hedging strategy by computing the relative error $\epsilon_R$ from a data set of $10^6$ simulated prices $S_t$ for $ t \in {0,1,2}.$ To do so, we wrote a script in Python. The relative error is given by $$\varepsilon_R= \frac{V_2-(S_2-K)^+}{S_2} .$$

In the following table \ref{TabResults},  empirical  results are presented for different  values of the strike $K$ and a sample of $10^6$ scenarios.

\begin{figure}[h!] \label{TabResults}
\begin{center}

\begin{tabular}{|l|c|c|c|c|l|} 
   \hline
   $K $ & $ 50$ & $75$ & $100$ & $125$ & $150$\\
  \hline
  \hline
   
    $E(S_0)$& 95.002& 94.983 &  95.006& 94.98 & 95.001 \\
  \hline
    $E(S_1)$& 99.56& 94.94 &  87.085& 82.104 & 81.736 \\
 \hline
    $E(S_2)$& 94.56& 90.180 &  82.716& 78.01 & 77.664 \\   
  \hline
  \hline
  $E(V_0) $&46.503 &29.357 & 16.960& 11.244& 6.7 \\
  \hline
    $\max V_0$ & 89.677 &  66.72 & 49.726&  33.05 &  22.562 \\
  \hline
  \hline
        $E(V(S_0)/S_{-1})$ & 0.465&0.294&  0.170 & 0.112& 0.067 \\
  \hline
    $E(V(S_0)/S_{0})$ &0.483 &  0.300 &  0.173 & 0.114& 0.066 \\
  \hline
   $\min(V(S_0)/S_0)$& 0.359 & 0.163  &  0.098& 0.032& 0 \\
  \hline
   $\max(V(S_0)/S_0)$ & 0.642& 0.479 & 0.358&  0.237 & 0.162\\
  \hline
  \hline
  $E(\epsilon_R)$ &0.017 & 0.077 & 0.076 & 0.064&   0.039 \\
  \hline
   $\sigma(\epsilon_R)$ &0.024  & 0.045 & 0.04& 0.037 &0.0317 \\
  \hline  
  $ \min\epsilon_R$ & $0$ & $2.23 * 10^{-6} $&  $1,9 * 10^{-7}$ &$ 5.975* 10^{-8} $& $0$ \\
  \hline   
  $\max(\epsilon_R)$ &0.18  & 0.19 & 0.195& 0.187 &0.187 \\
\hline
\hline
$E(\theta_0S_0/V_0)$ &199\%  & 255\% & 322\%& 333\% &313\% \\
\hline
$E(\theta_1S_1/V_1)$ &205\%  & 230\% & 134\%& 32\% &3\% \\
\hline
\end{tabular}
\end{center}
\caption{The empirical results.}
\label{TabResults}
\end{figure}

 We observe that the executed prices depend on the strike $K>0$, i.e. there is a market impact of the orders on the prices. Indeed, as expected, the orders we send depend on the payoff function. As $K$ increases, the payoff decreases and, as expected, the option price $V_0$ decreases. The distribution of $S_1$ admits two regimes as seen in Figure \ref{DistriS1} that correspond to the bid and ask prices.

Notice that the  proportion of the portfolio value invested in the risky assets at time $t=1$ decreases as the payoff decreases. We also observe that this proportion decreases (resp. increases) when the price $S$ decreases (resp. increases) between time $t=0$ and $t=1$, i.e. when $\Delta S_1<0$ (resp. $\Delta S_1\ge 0$). At last, the empirical results obtained for the relative error confirm  the efficiency of the super-hedging strategy, see Figure \ref{DistriErr}.

\begin{figure}[h!]
\begin{minipage}[c]{.38\linewidth}
\includegraphics[scale=0.5]{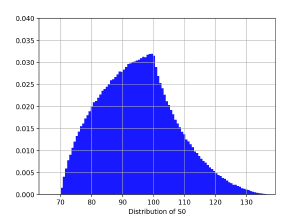}
\caption{}
\label{DistriS0}
 \end{minipage}
 \begin{minipage}[c]{.38\linewidth} 
\includegraphics[scale=0.5]{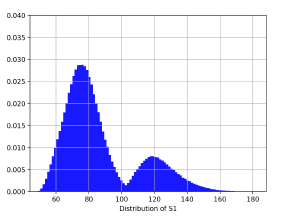} 
\caption{K=100.}
\label{DistriS1}
 \end{minipage} 
 \end{figure}



\begin{figure}[h!]
\begin{minipage}[c]{.38\linewidth}
\includegraphics[scale=0.5]{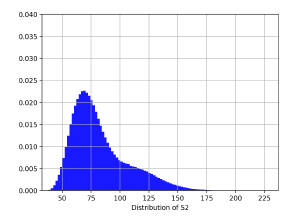}
\caption{K=100.}
\label{DistriS2}
 \end{minipage}
 \begin{minipage}[c]{.38\linewidth} 
\includegraphics[scale=0.5]{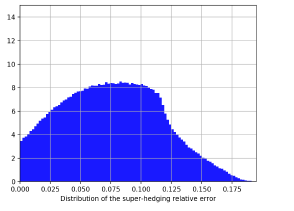} 
\caption{K=100.}
\label{DistriErr}
 \end{minipage} 
 \end{figure}




\bigskip

\newpage

\end{document}